\def\BibTeX{{\rm B\kern-.05em{\sc i\kern-.025em b}\kern-.08em
    T\kern-.1667em\lower.7ex\hbox{E}\kern-.125emX}}
\pgfplotsset{compat=newest}
\pgfplotsset{plot coordinates/math parser=false}
\newlength\figureheight
\newlength\figurewidth
\newcommand{\sat}{\operatorname{sat}}
\newcommand{\sgn}{\operatorname{sign}}
\newcommand{\Diag}{\operatorname{diag}}
\newcommand{\0}{{0}}
\newtheorem{theorem}{Theorem}
\newtheorem{definition}{Definition}
\newtheorem{lemma}{Lemma}
\newtheorem{assumption}{Assumption}
\newtheorem{remark}{Remark}
\newenvironment{proof}{\noindent {\bf Proof.}}{\hfill \hspace*{1pt} \hfill $\square$\\}
\begin{document}

\title{Robust data-driven control design for linear systems subject to input saturations}
\author{A. Seuret, S. Tarbouriech
\thanks{This work has been partially funded by the European Union - NextGenerationEU. Research by Sophie Tarbouriech is partially funded by ANR via project HANDY, number ANR-18-CE40-0010.}
\thanks{A: Seuret is with the Univ. de Sevilla, Camino de los Descubrimientos, s/n 41092 Sevilla, Spain.  (e-mail: aseuret@us.es).}
\thanks{S. Tarbouriech is with LAAS-CNRS, Universit\'e de Toulouse, CNRS, Toulouse, France. (e-mail: tarbour@laas.fr)..}}

\maketitle

\begin{abstract}
This paper deals with the problem of providing a data-driven solution to the local stabilization of linear systems subject to input saturation. After presenting a model-based solution to this well-studied problem, a systematic method to transform model-driven into data-driven LMI conditions is presented. This technical solution is demonstrated to be equivalent to the recent advanced results on LMI formulations based on S-procedure or Peterson Lemmas. However, the advantage of the proposed method relies on its simplicity and its potential to be applicable to a wide class of problems of stabilization of (non)linear discrete-time systems. The method is then illustrated on an academic example. 
\end{abstract}

\begin{IEEEkeywords}
Linear systems, Saturation, LMI, Data-driven control design
\end{IEEEkeywords}

\section{Introduction}
The robustness properties for dynamical control systems have been studied in several works from an analysis or design context \cite{ebihara2015s,postlethwaite2007robust, scherer2001theory}. Then, robustness conditions to face uncertainty or presence of additive disturbances have been proposed allowing to ensure the stability and a certain level of performance for the systems under consideration. Most of these conditions are formulated in the form of linear matrix inequalities (LMIs), due to the powerful numerical and optimization procedures, as semi-definite programming, which can be handled. The general common feature of this kind of methods is that they are model-based, possibly modeling also the presence of uncertainties as norm-bounded or polytopic uncertainties. In front of complex systems to model, we need to consider very imprecise or even unknown mathematical models of the dynamic evolution, including for examples nonlinearities. The consequence is then that the  model-based methods for analysis and controller design may reveal to be difficult or even impossible to apply. Adapting the tools issued from the control theory, some work have emerged based on some data information: see, for example, \cite{berberich2020robust,berberich2020data,hou2013model,van2020data} and the references therein.

Hence, some studies addressing the stability or stabilization criteria from a data-driven point of view, have been published. In particular several problems related to the design of state/output feedback controllers for linear systems have been revisited   \cite{dePersis2019formulas,vanWaarde2020noisy,van2020data, bre:per:for:tes/CDC2021}. In these works considering the case of exact data experiments for linear time-invariant systems, i.e. without noise or uncertainties, equivalent formulations between model-based and data-based criteria both in stability analysis and control design have been exhibited. 

Another important feature when dealing with dynamical control systems pertains the control input saturation due to limitations on the actuators \cite{tar:gar:gom:que/book}. At our knowledge, few works in the literature addressed the problem to deal with constrained control (see, for example, \cite{pig:for:bem/CST2018} and the references therein) and more especially with saturating control input. The objective of the current paper is to bring some preliminary bricks in considering this aspect. To this end, we consider the design of a state-feedback saturated control law allowing to ensure the regional (local) asymptotic stability of the origin when the system is noisy-free and to ensure the convergence to an attractor when the system is affected by noise. Indeed, the characterization of an inner-approximation of the basin of attraction of the origin together with an outer-approximation of the attractor are proposed first by model-based techniques and then by data-driven techniques. Taking inspiration from the model-based conditions (see Theorem \ref{th1} in Section \ref{sec:model-theorem}), which are written in a friendly-data form, the data-driven local stabilization is formulated through matrix inequalities conditions (See Theorem \ref{th:theo_data-robust} in Section \ref{sec:data-theorem}). The technique allowing to exhibit the sufficient conditions is based on the use of Lyapunov arguments, generalized sector-bounded conditions to deal with the saturation and S-procedure to handle the presence of noise. A first attempt in this direction was proposed in \cite{seuret2023LCSS}, to revisit the well-know problem of $\mathcal L_2$ stabilization but using a data-driven approach. Differently from this paper, the implicit objectives are here to maximize the inner-approximation of the basin of attraction of the origin and to minimize the outer-approximation of the attractor. The main rational behind the matrix inequalities formulation is due to the friendly form of the model-based conditions which allows to directly derive the data-driven conditions
thanks to a matrix-constrained relaxation (see Lemma \ref{lem0} in Section \ref{Sec:Prel}). Note that the matrix inequalities conditions are quasi-LMIs in the sense that there is the product between matrices and scalars.

The paper is organized as follows. Section \ref{sec:prob} presents the system under consideration and formally state the control problem. In Section \ref{sec:model-theorem}, sufficient conditions  to solve the control problem are formulated as quasi-LMIs, in the sense that there is a product between a matrix and a scalar. Section \ref{sec:main} deals with the local stabilization from a data-driven  point of view. Section \ref{sec:sim} illustrates the theoretical results and proposes some insights regarding the influence of the tuning parameters. Comparison between model-based and data-driven approaches is proposed for different collections of data to depict the trade-off between the estimate of the basin of attraction and of the attractor. Some concluding remarks end the paper in Section~\ref{sec:conclu}.

 \textbf{Notation.} Throughout the paper, $\mathbb N$ denotes the set of natural numbers, $\mathbb R$ the real numbers, $\mathbb R^n$ the $n$-dimensional Euclidean space, $\mathbb R^{n\times m}$ the set of all real $n \times  m$ matrices and $\mathbb S^n$ ($\mathbb S^n_+$) the set of symmetric (positive definite) matrices in $\mathbb R^{n\times n}$.  For any $n$ and $m$ in $\mathbb N$, matrices $I_n$ and $\boldsymbol{0}_{n,m}$ ($\boldsymbol{0}_{n}=\boldsymbol{0}_{n,n}$) denote the identity matrix of $\mathbb R^{n\times n}$ and the null matrix of $\mathbb R^{n\times m}$, respectively. When no confusion is possible, the subscripts of these matrices that precise the dimension, will be omitted. For any matrix $M$ of $\mathbb R^{n\times n}$, the notation  $M\succ0$, ($M\prec0$) means that $M$ is in $\mathbb S^n_+$. 
 For any matrices $A=A^{\!\top},B,C=C^{\!\top}$ of appropriate dimensions, matrix $\left[\begin{smallmatrix}A&B\\\ast & C  \end{smallmatrix} \right]$ denotes the symmetric matrix $\left[\begin{smallmatrix}A&B\\ B^{\!\top}& C  \end{smallmatrix} \right]$. For any matrix $N\in\mathbb R^{n\times m}$, notation $N_{(i)}$, for any $i=1,\dots,n$, stands for the $i^{th}$ row of $N$. $\Vert x \Vert$ denotes the Euclidean norm of $x$. 
 For a symmetric matrix, $\lambda_{m}(M)$ and $\lambda_M(M)$ denote its minimal and maximal eigenvalues respectively of a square matrix $M$. For a matrix $M\in\mathbb S^n$, $M\succ 0$ and any positive scalar $\alpha \in\mathbb R_{>0}$, we denote the ellipsoid $\mathcal E(M,\alpha)=\left\{
x\in\mathbb R^n,~ x^{\!\top} M x\leq \alpha^{-1} \right\}$.

\section{Problem formulation}
\label{sec:prob}

\subsection{System data}
In this paper, we consider the discrete-time linear system subject to an input saturation and affected by an external perturbation. Such a system is described by the following equations 
\begin{align}\label{eq:model_x}
	\left\{
		\begin{array}{lcl} 
			x^+ &=& A x + B \sat (u)+w,\\
			x_0&\in& \mathbb R^{n_x},
		\end{array}
	\right.
\end{align}
where $ x \in \mathbb R^{n_x}$ is the state vector, which adopts the following notation $x^+ = x_{k+1}$ and $x = x_k$, $u=u_k\in \mathbb R^{n_u}$ is the control input. The system is perturbed by the unknown noise signal $w=w_k$, assumed to be bounded leading to the following assumption.
\begin{assumption}\label{ass:noise}
There exists $\lambda >0$ such that the norm of the disturbance verifies $w^{\!\top}_k w_k < \lambda$, for all $k\geq0$. In other words, the disturbance belongs to the following set $\Omega_\lambda$ 
\begin{equation}
    \Omega_\lambda:=\{v\in\mathbb R^n,\quad v^{\!\top} v\leq \lambda \}.
\end{equation}
\end{assumption}

The model $\mathcal G$ is defined from the matrices of the system 
as $\mathcal G:=\begin{pmatrix} A,B\end{pmatrix}\in \mathbb R^{{n_x}\times {n_x}}\times \mathbb R^{{n_x}\times {n_u}}$.

The saturation function $\sat(u)$ is the classical decentralized vector-valued saturation map from $R^{{n_u}}$ to $R^{{n_u}}$, whose the components are defined by:
\begin{equation}\label{sat}
\sat(u_i) = \sgn(u_{i})\min (|u_{i}|,\bar{u}_{i}), \quad \forall i=1,\dots,{n_u}
\end{equation}
where $u_{i}$ refers to the $i^{th}$ control input and $\bar{u}_{i}$ is the $i^{th}$ entry  of the vector $\bar{u} \in \mathbb R^{{n_u}}$ that is the level vector of the saturation.

In this paper, we focus on the design of a static state-feedback control law of the form 
\begin{equation}\label{ControlLaw}
u=Kx,
\end{equation} 
with $K\in\mathbb R^{n_u\times n_x}$, that regionally (locally) stabilizes the trajectories of the closed-loop system \eqref{eq:model_x} with \eqref{ControlLaw} in the absence of perturbation.

\subsection{Preliminaries on generalized sector conditions}

Due to the presence of the saturation map in system (\ref{eq:model_x}), one has to take care of the notion of stability we can ensure for the closed-loop system (\ref{eq:model_x})-\eqref{ControlLaw} \cite{tar:gar:gom:que/book}. In order to address this problem and therefore to handle the presence of the saturation, we follow the 
 approach proposed in \cite{da2005antiwindup} and \cite{tar:gar:gom:que/book}, which consists in rewriting \eqref{eq:model_x} as
\begin{equation}\label{sys}
x^+=(A+BK)x+B\phi(u)+w,
\end{equation}
where $\phi(u)$ is the dead-zone function 
defined by
\begin{equation}\label{deadzone}
\phi(u)=\sat(u)-u.
\end{equation}
Following Remark 7.3 in \cite{tar:gar:gom:que/book}, we use the following \textit{Generalized sector condition} lemma
to handle the dead-zone \eqref{deadzone}.
\begin{lemma} (\cite{tar:gar:gom:que/book})
\label{lema_cond_setor}
Consider 
a matrix $G \in \mathbb R^{n_u \times n_x}$. The following relation holds
\begin{equation}\label{cs}
\phi(u)^{\!\top} 
 T [\sat(u)+G x]\leq 0,
\end{equation}
with any diagonal positive definite  matrix $T 
\in \mathbb 
R^{{n_u}\times {n_u}}$ provided that $x \in \mathcal{S}(G)$ where
\begin{equation}\label{S_set}
\mathcal{S}(G)=\{x \in \mathbb R^{{n_x}}:|G_{(i)} x|\leq \bar{u}_i,\quad \forall i=1,\dots,n_u\}.
\end{equation}
\end{lemma}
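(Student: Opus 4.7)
The plan is to exploit the decentralized structure of both the saturation map and the diagonal matrix $T$ to reduce the claim to a scalar, componentwise inequality, and then to dispose of each component through a three-case analysis on the sign of $u_i-\sat(u_i)$.

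First, writing $T=\Diag(t_1,\dots,t_{n_u})$ with $t_i>0$ and noting that $\phi$ acts componentwise, one has
\begin{equation*}
\phi(u)^{\!\top} T[\sat(u)+Gx]=\sum_{i=1}^{n_u} t_i\,\phi_i(u_i)\bigl[\sat(u_i)+G_{(i)}x\bigr],
\end{equation*}
so it is enough to show that for every $i$ and every $x\in\mathcal S(G)$ the scalar product $\phi_i(u_i)\bigl[\sat(u_i)+G_{(i)}x\bigr]$ is non-positive; summing with positive weights $t_i$ then yields \eqref{cs}.

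Next I would run the standard three-case analysis on the $i$-th component. If $|u_i|\leq\bar u_i$ then $\sat(u_i)=u_i$ and hence $\phi_i(u_i)=0$, so the product vanishes. If $u_i>\bar u_i$ then $\sat(u_i)=\bar u_i$ and $\phi_i(u_i)=\bar u_i-u_i<0$; on the other hand $x\in\mathcal S(G)$ gives $G_{(i)}x\geq-\bar u_i$, so $\sat(u_i)+G_{(i)}x=\bar u_i+G_{(i)}x\geq 0$, and the product of a non-positive and a non-negative factor is non-positive. The symmetric case $u_i<-\bar u_i$ is handled identically: now $\phi_i(u_i)=-\bar u_i-u_i>0$, and the membership condition $G_{(i)}x\leq\bar u_i$ forces $\sat(u_i)+G_{(i)}x=-\bar u_i+G_{(i)}x\leq 0$, again giving a non-positive product.

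There is no real obstacle here beyond bookkeeping: the argument is essentially the standard generalized sector bound from \cite{da2005antiwindup,tar:gar:gom:que/book}, and the only subtlety is observing that the set $\mathcal S(G)$ is exactly what makes the sign of $\sat(u_i)+G_{(i)}x$ opposite to the sign of the dead-zone $\phi_i(u_i)$ in the two saturated regimes. Diagonality of $T$ is essential, since it is what lets the inequality aggregate component-by-component without cross terms.
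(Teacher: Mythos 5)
Your proof is correct: the reduction to scalar components via the diagonality of $T$ and the three-case sign analysis on each $u_i$ is exactly the standard argument for the generalized sector condition. The paper itself gives no proof (the lemma is cited from \cite{tar:gar:gom:que/book}), and your argument faithfully reproduces the reasoning of that reference, correctly identifying that membership in $\mathcal{S}(G)$ is precisely what forces $\sat(u_i)+G_{(i)}x$ to have sign opposite to $\phi_i(u_i)$ in the saturated regimes.
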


\subsection{Control objectives}
It is well-known (see \cite{hu:lin/book2001}, \cite{tar:gar:gom:que/book} and the references therein) that ensuring global asymptotic stability of the origin for saturated systems (\ref{eq:model_x}) (without perturbation) is in general impossible. In other words, one cannot ensure the global asymptotic stability of the origin for any initial condition except if the open loop is not exponentially unstable, and in particular if the open loop is Hurwitz. That means that the basin of attraction of the origin is not the whole state space and needs to be determined, that turns out to be a complex task. In this case, the objective is to characterize an approximation of the basin of attraction of the origin. Considering level sets built from Lyapunov functions leads to inner-approximation of the basin of attraction of the origin. Furthermore, in presence of perturbation satisfying Assumption \ref{ass:noise} (that is bounded but not vanishing noise $w$ in the dynamics), the closed-loop trajectories will not converge to the origin but to some equilibrium points and therefore one can be interested by characterizing the attractor toward which will converge the closed-loop trajectories. Using also level sets built from Lyapunov functions leads to outer-approximation of the real attractor. 

Hence, this paper deals with the following problems: 
\begin{itemize}
      \item[(P1)] In the case $w=0$, characterize an approximation of the basin of attraction, expressed as a level of the Lyapunov function given by $\mathcal E(P,1)$, where $P$ is a symmetric positive definite matrix to be designed. Thus, for any initial condition belonging to $\mathcal E(P,1)$, the closed-loop trajectories will asymptotically converge to the origin.
      
      \item[(P2)] In the case $w \neq 0$, characterize an approximation of the attractor capturing the closed-loop trajectories expressed as a level set of of the same Lyapunov function given by $\mathcal E(P,\varepsilon)$, where $P$ is a symmetric positive definite matrix and $\varepsilon>0$ are to be designed.
      
       \item[(P3)] In addition to the previous problems, maximize the approximation of the basin of attraction $\mathcal E(P, )$, as well as minimize the approximation of the attractor $\mathcal E(P,\varepsilon)$.
 \end{itemize}
 
 Moreover, as we consider that the model is not assumed to be perfectly known but only approached via finite set of data experiments, the final objective is to solve Problems (P1), (P2) and (P3) but from a data-driven point of view. Hence, we want to revisit the solutions obtained from a model-based approach to provide a new data-driven practical stabilization criterion, which arises from matrix manipulations of a suitable model-based preliminary result. This transformation is made possible thanks to a new lemma, which can be seen as a particular case of those from the literature, but which is particularly well-suited for data-driven design.

\section{Model-based local practical stabilization}
\label{sec:model-theorem}
The following theorem follows the methods presented in Chapter 3 in \cite{tar:gar:gom:que/book} and deals with the model-based design of stabilizing controllers (\ref{ControlLaw}) for system \eqref{eq:model_x} under Assumption \ref{ass:noise}.
Therefore, the following theorem states a solution to solve Problems (P1), (P2) and (P3) through a model-based approach.


\begin{theorem}\label{th1}
Under Assumption \ref{ass:noise}, and for a given $\mu\in(0,1)$, $\alpha_1>0$ and $\alpha_2>0$, assume that there exists  
$$\begin{array}{lcl}
\mathcal {D}_{\mathcal V}^1&:=&\{\varepsilon, \mu, W,S,Y,Z\} \\
&\in & \mathbb R_{>0} \!\times\! \mathbb R_{>0}
\!\times\! \mathbb S_{n_x}
\!\times\! \mathbb D_{n_x}\!\times\! {\mathbb R^{{n_x}\!\times\! {n_u}}} \!\times\! {\mathbb R^{{n_x}\!\times\! {n_u}}}
\end{array}$$ solution to the following optimization problem
\begin{equation}\label{opt1}
    \begin{array}{lcl}
\max_{\mathcal {D}_{\mathcal V}^1} & \alpha_1 \varepsilon +\alpha_2 Tr(W)\\
s.t. & \varepsilon >1,\ \Phi(\mathcal G)\succ0,\ \begin{bmatrix}
W & Z_{(i)}^{\!\top}\\
\ast & \bar u_{i}^2 \\ 
\end{bmatrix}\succ0, 
\end{array}    
\end{equation}
for all $i=1,\dots, {n_u}$, where 
$$\begin{array}{lcl}
\Phi(\mathcal G)&=&\begin{bmatrix}
(1-\mu)W & Y^{\!\top}\!\! +\!Z^{\!\top} & W A^{\!\top} \!\! +\!Y^{\!\top} B^{\!\top}\\
\ast &  2S &   SB^{\!\top}\\
\ast &\ast&  W-\frac{\lambda\varepsilon}{\mu} I_{n_x}
\end{bmatrix}.
\end{array}
$$
Then, the control law \eqref{ControlLaw} with $K=YW^{-1}$ ensures that Problems (P1), (P2) and (P3) are solved, i.e. 
\begin{itemize}
    \item When $w=0$, the ellipsoid $\mathcal E(W^{-1},1)$ is an approximation of the basin of attraction of the origin for the closed-loop system \eqref{eq:model_x}-\eqref{ControlLaw}.
\item When $w\in\Omega_\lambda $ and $w\neq 0$, the solutions to the closed-loop system \eqref{eq:model_x}-\eqref{ControlLaw} initialized in $\mathcal E(W^{-1},1)\setminus \mathcal E(W^{-1},\varepsilon)$ converge to the attractor $\mathcal E(W^{-1},\varepsilon)$.
\item Sets $\mathcal E(W^{-1},1)$ and $\mathcal E(W^{-1},\varepsilon)$ are optimal with respect to the cost function $\alpha_1 \varepsilon +\alpha_2 Tr(W)$, which aims deriving a compromise between enlarging $\mathcal E(W^{-1},1)$ and minimizing $\mathcal E(W^{-1},\varepsilon)$.
\end{itemize}
\end{theorem}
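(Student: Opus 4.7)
The plan is to use the quadratic Lyapunov function $V(x) = x^\top W^{-1} x$ together with the changes of variables $P = W^{-1}$, $K = YW^{-1}$, $G = ZW^{-1}$ and $T = S^{-1}$, and to derive from $\Phi(\mathcal G)\succ 0$ the ISS-type decrement
\begin{equation*}
V(x^+) \leq (1-\mu)\,V(x) + \mu/\varepsilon
\end{equation*}
valid on $\mathcal E(W^{-1},1)$ under Assumption \ref{ass:noise}. The saturation will be handled by Lemma \ref{lema_cond_setor} and the noise by an S-procedure multiplier $\tau_w := \mu/(\lambda\varepsilon)$, precisely chosen so that $\tau_w\lambda = \mu/\varepsilon$. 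Once this decrement is in hand, items (P1)--(P3) follow by standard Lyapunov arguments.

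First I would verify the inclusion $\mathcal E(W^{-1},1) \subset \mathcal S(G)$, so that Lemma \ref{lema_cond_setor} activates all along closed-loop trajectories staying in $\mathcal E(W^{-1},1)$. A Schur complement applied to the second LMI of \eqref{opt1} gives $Z_{(i)} W^{-1} Z_{(i)}^\top < \bar u_i^2$ for every $i$. Then Cauchy--Schwarz yields $(G_{(i)} x)^2 \leq (G_{(i)} W G_{(i)}^\top)(x^\top W^{-1} x) < \bar u_i^2$ for every $x \in \mathcal E(W^{-1},1)$, hence $|G_{(i)} x| < \bar u_i$ and $x \in \mathcal S(G)$.

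The central and most delicate step is to transform $\Phi(\mathcal G)\succ 0$ into the announced decrement. A congruence with $\operatorname{diag}(W^{-1}, S^{-1}, I)$ rewrites $\Phi \succ 0$ as an equivalent LMI whose non-zero blocks are $(1-\mu)P$, $(K+G)^\top T$, $(A+BK)^\top$, $2T$, $B^\top$ and the bottom-right block $W - \tfrac{\lambda\varepsilon}{\mu}I$. The latter is to be recognized as the Schur complement, with respect to a $\tau_w I$ block, of an auxiliary $2\times 2$ block pairing $w$ with the next-state variable $x^+$ via $P^{-1}=W$. Undoing this Schur complement delivers the quadratic-form inequality
\begin{equation*}
V(x^+) - (1-\mu)V(x) - 2\phi^\top T[Kx + Gx + \phi] - \tau_w\, w^\top w \leq 0
\end{equation*}
valid for every $(x,\phi,w)$, with $V(x^+)$ computed along $x^+ = (A+BK)x + B\phi + w$. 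Substituting $\phi = \phi(Kx)$ so that $Kx + \phi = \sat(Kx)$, invoking Lemma \ref{lema_cond_setor} on $\mathcal S(G)$ to discard the non-positive sector term, and bounding $\tau_w w^\top w \leq \tau_w\lambda = \mu/\varepsilon$ via Assumption \ref{ass:noise} then yields the decrement. The main technical obstacle is bookkeeping through this congruence/Schur chain; the value $\tau_w = \mu/(\lambda\varepsilon)$ is tailored so that the noise-elimination block coincides exactly with the bottom-right of $\Phi$.

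From the decrement, setting $w=0$ immediately gives $V(x^+) \leq (1-\mu) V(x)$ on $\mathcal E(W^{-1},1)$, hence forward invariance of $\mathcal E(W^{-1},1)$ and geometric convergence to the origin, proving (P1). For (P2), rewriting the decrement as $V(x^+) - 1/\varepsilon \leq (1-\mu)\bigl(V(x) - 1/\varepsilon\bigr)$ simultaneously shows that $\mathcal E(W^{-1},1)$ remains forward invariant under $w \in \Omega_\lambda$, that $\mathcal E(W^{-1},\varepsilon)$ is positively invariant, and that trajectories in $\mathcal E(W^{-1},1) \setminus \mathcal E(W^{-1},\varepsilon)$ enter the attractor geometrically. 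Finally (P3) is a direct reading of the cost: enlarging $\operatorname{Tr}(W)$ inflates the estimated basin $\mathcal E(W^{-1},1)$ while enlarging $\varepsilon$ shrinks the estimated attractor $\mathcal E(W^{-1},\varepsilon)$, so maximizing $\alpha_1\varepsilon + \alpha_2 \operatorname{Tr}(W)$ realizes the desired trade-off.
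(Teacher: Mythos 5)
Your proposal is correct, and while it rests on the same ingredients as the paper (quadratic Lyapunov function, the generalized sector condition of Lemma \ref{lema_cond_setor}, an S-procedure for the noise, and the congruence with $\Diag(W,S,I)$ plus Schur complements to reach $\Phi(\mathcal G)$), the execution of the key step is genuinely different and, in my view, cleaner. The paper targets the non-strict condition $\Delta V\leq 0$ on the annulus $\mathcal E(P,1)\setminus\mathcal E(P,\varepsilon)$ via an S-procedure with three multipliers $\bar\mu_1,\bar\mu_2,\bar\mu_3$, and then has to perturb the natural choice ($\bar\mu_1=\mu$, $\bar\mu_2=\frac{\mu}{\lambda\varepsilon}(1-\frac{\varepsilon\gamma}{\mu})$, $\bar\mu_3<\gamma$ with $\gamma$ arbitrarily small) to keep the residual $\Theta_\lambda$ strictly positive, followed by a separate invariance argument for $\mathcal E(W^{-1},\varepsilon)$. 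You instead take the single noise multiplier $\tau_w=\mu/(\lambda\varepsilon)$ exactly (i.e.\ the limit $\gamma\to 0$, $\bar\mu_3\to 0$ of the paper's choice), for which one checks that your quadratic-form inequality is, after the congruence and two Schur complements, \emph{exactly} $\Phi(\mathcal G)\succeq 0$; the constant terms $\bar\mu_1\varepsilon^{-1}$ and $\tau_w\lambda$ cancel and yield the one-step contraction $V(x^+)-\varepsilon^{-1}\leq(1-\mu)\bigl(V(x)-\varepsilon^{-1}\bigr)$. This single inequality delivers forward invariance of both ellipsoids, geometric convergence to the attractor with explicit rate $1-\mu$, and the $w=0$ case, all at once --- whereas the paper's non-strict $\Delta V\leq 0$ outside the attractor does not by itself quantify convergence. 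What you lose is only the paper's slightly more general bookkeeping of independent multipliers; what you gain is the elimination of the $\gamma$-perturbation argument and a stronger, rate-explicit conclusion. Your treatment of the inclusion $\mathcal E(W^{-1},1)\subset\mathcal S(G)$ (Schur complement of the second LMI plus Cauchy--Schwarz) fills in a step the paper only asserts, and is correct.
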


\begin{proof}
Consider the Lyapunov function given by $V(x)=x^{\!\top} P x$, for a given $P\in \mathbb S_+^{n_x}$. 
In order to address Problems (P1) and (P2), the objective is to ensure that the forward increment of $V$, i.e. $
    \Delta V(x)={x^+}^{\!\top} Px^+-x^{\!\top} Px$ verifies:
\begin{equation}\label{proof:DV2}
 \Delta V(x) \! \preceq \!0,\ \forall (x,w) \mbox{ s.t.}\left\{\!\!\begin{array}{ll}
     x\!\in\! \mathcal E(P,1), &\! (\mbox{i.e. }x^{\!\top} P x \leq 1) \\
     x\!\notin\! \mathcal E(P,\varepsilon), &\! (\mbox{i.e. } x^{\!\top} P x \geq \varepsilon^{-1})\\
     w\!\in\! \Omega_\lambda, &\! (\mbox{i.e. } w^{\!\top} w \leq \lambda)
\end{array}\right.
\end{equation}
which is implied, thanks to the use of S-procedure \cite{Boyd}, by the following condition
\begin{equation}\label{proof:DV3}
\begin{array}{lcl}
   \mathcal{L}(x,w) &\!\!\!\!\!=& \!\!\!\!\Delta V(x) \!+\! \bar\mu_1 (x^{\!\top} P x \!-\! \varepsilon^{-1})
   \!+\! \bar\mu_2 (\lambda \!-\! w^{\!\top} w) \\
   &&\!\!\!+ \bar\mu_3 (1 -x^{\!\top} P x) \preceq  0
   \end{array}
\end{equation}
with $\bar\mu_i \geq 0$, $i=1, 2, 3$. Let us introduce the following augmented vector $\xi=(x,\phi(Kx),w)$, then $\Delta V(x)$ reads:
\begin{equation}
    \Delta V(x)=\xi^{\!\top} \left(
    \begin{bsmallmatrix}
    (A+BK)^{\!\top}\\B^{\!\top}\\I_{n_x}
    \end{bsmallmatrix}  \!\!\!  P\begin{bsmallmatrix}
    (A+BK)^{\!\top}\\B^{\!\top}\\I_{n_x}
    \end{bsmallmatrix}^{\!\top}
    -\begin{bsmallmatrix}
    P&0&0\\0&0&0\\0&0&0
    \end{bsmallmatrix}
    \right)
    \xi\\
\end{equation}

The satisfaction of the last inequalities in  (\ref{opt1}) means that the ellipsoid $\mathcal E(W,1)$ is included in the set $\mathcal{S}(G)$. From Lemma \ref{lema_cond_setor}, for $x \in \mathcal E(W,1)$ 
inequality (\ref{cs}) holds and reads using $\xi$ as follows:
\begin{equation}\label{cs2}
    \xi^{\!\top} \begin{bsmallmatrix}
    0 & (K+G)^{\!\top} T &0 \\
    T (K+G) & 2T      &0 \\
    0&0&0
    \end{bsmallmatrix}\xi \leq 0.
\end{equation}

Then, merging (\ref{cs2}) and (\ref{proof:DV3}) yields 
$$\mathcal{L}_1(x,w) := \mathcal{L}(x,w) - \xi^{\!\top} \begin{bsmallmatrix}
    0 & (K+G)^{\!\top}T &0 \\
    T (K+G) & 2T      &0 \\
    0&0&0
    \end{bsmallmatrix}\xi\geq \mathcal{L}(x,w).$$
Therefore, having $\mathcal{L}_1(x,w)$ negative  implies that $\mathcal{L}(x,w)$ is also negative for all $(x,w)$. Let us rewrite the expression of $\mathcal{L}_1(x,w)$ in a more compact form.
$$\mathcal{L}_1(x,w) := -\xi^{\!\top} \Phi_1(\mathcal G)\xi -\bar\Theta _\lambda
$$
where
$$
\begin{array}{rcl}
\Phi_1(\mathcal G)&\!\!\!\!=&\!\!\!\!\!\begin{bsmallmatrix}
    \!\!(1\!-\!\bar\mu_1\!+\!\bar\mu_3) P&(K\!+\!G)^{\!\top}T&\!0\\
    T(K\!+\!G)&2T&\!0\\
    0&0&\!\!\!\!\bar \mu_2 I_{n_x}
    \end{bsmallmatrix}\!-\!\begin{bsmallmatrix}
\!    (A\!+\!BK)^{\!\top}\!\!\\B^{\!\top}\\I_{n_x}
    \end{bsmallmatrix}  \!  P\begin{bsmallmatrix}
 \!   (A\!+\!BK)^{\!\top}\!\!\\B^{\!\top}\\I_{n_x}
    \end{bsmallmatrix}^{\!\!\top}
\\
\Theta_\lambda&\!\!\!\!=&\!\!\!\!\bar\mu_1\varepsilon^{-1}-\bar\mu_2\lambda-\bar\mu_3
\end{array}
$$
In the next developments, we will show that the existence of a solution to condition $\Phi(\mathcal G)\succ0$ ensures the existence of $\bar\mu_1>0$, $\bar\mu_2>0$ and  $\bar\mu_3>0$ such that $\mathcal L_1(x,w)\leq0$.

Applying the Schur complement to $\Phi_1(\mathcal G)$, 
pre- and post-multiplying it by $\Diag(W,S,I_{2n_x})$, with $W=P^{-1}$ and $S=T^{-1}$, and selecting $Y=KW$ and $Z=GW$ yield
$$
\begin{bmatrix}
    (1\!-\!\bar\mu_1\!+\!\bar\mu_3) W& Y^{\!\top}\!\!+\!Z^{\!\top} &0 & WA^{\!\top} \!+\!Y^{\!\top} B^{\!\top}\\
    \ast&2S&0 & SB^{\!\top}\\
    \ast&\ast&\bar\mu_2I_{n_x}&I_{n_x}\\
    \ast&\ast&\ast& W
    \end{bmatrix}
    \succ 0,
$$
which is equivalent to 
\begin{equation}
\Phi_2(\mathcal G):=\begin{bmatrix}
    (1\!-\!\bar\mu_1\!+\!\bar\mu_3) W& Y^{\!\top}\!\!\!+\!\!Z^{\!\top} & WA^{\!\top} \!+\!Y^{\!\top}\! B^{\!\top}\\
    \ast&2S& SB^{\!\top}\\
    \ast&\ast& W-\bar\mu_2^{-1} I_{n_x}
    \end{bmatrix}\succ 0
\end{equation}
We have shown so far that having $\mathcal L_1\leq0$ is equivalent to $\Phi_2(\mathcal G)\succ0$ and $\Theta_\lambda >0$. Enforcing the introduction of $\Phi(\mathcal G)$, we get the following expression \begin{equation}\label{ineq:phiGW}
\Phi_2(\mathcal G)\!=\!\Phi(\mathcal G)\!+\!\begin{bmatrix}
    (\mu\!-\!\bar\mu_1\!+\!\bar\mu_3) W& 0& 0\\
    \ast&0& 0\\
    \ast&\ast&\left(\frac{\lambda\varepsilon}{\mu}\!-\!\frac{1}{\bar\mu_2} \right) I_{n_x}
    \end{bmatrix}.
\end{equation}
Then, consider the following selection 
$$\begin{array}{lcl}
\bar\mu_1:=\mu \in (0,1), \quad 
\bar\mu_2:= \frac{\mu}{\lambda\varepsilon}\left(1-\frac{\varepsilon\gamma}{\mu}\right), \quad 
\bar\mu_3<\gamma, 
\end{array} $$ 
 for sufficiently small $\gamma>0$ such that $\bar\mu_2>0$. This selection ensures that $\Theta_\lambda=\gamma-\bar\mu_3>0$, which is required. 
Using this selection for $\bar\mu_2$, we note that 
$$
\frac{\lambda\varepsilon}{\mu}\!-\!\frac{1}{\bar\mu_2}\!=\!\frac{\lambda\varepsilon}{\mu}\left(1\!-\!\frac{1}{1\!-\!\gamma\varepsilon/\mu}\right)\!=\!  \frac{- \gamma \lambda \varepsilon^2}{\mu^2\!-\!\gamma\varepsilon\mu}>  \frac{- \bar\mu_3 \lambda \varepsilon^2}{\mu^2\!-\!\gamma\varepsilon\mu},
$$
so that the following inequality holds
\begin{equation}\label{ineq:phiGW2}
\begin{array}{lcl}
\Phi_2(\mathcal G)\succeq\Phi(\mathcal G)+\bar\mu_3\begin{bmatrix}
     W& 0& 0\\
    \ast&0& 0\\
    \ast&\ast&-   \frac{\lambda \varepsilon^2}{\mu^2\!-\!\gamma\varepsilon\mu} I_{n_x}
    \end{bmatrix}.
    \end{array}
\end{equation}
Hence, if condition $\Phi(\mathcal G)\succ0$ holds, then there exists a sufficiently small $\bar\mu_3$ such that $\Phi_2(\mathcal G)\succ0$.  
%
%
%
%
Therefore, 
condition $\Phi(\mathcal G)\succ0$ ensures that there exist $\bar\mu_1>0,\bar\mu_2>0, \bar\mu_3>0$ such that $\mathcal L_1(x,w)$ is negative, which ensures \eqref{proof:DV2}.

Therefore, the satisfaction of relation  $\Phi(\mathcal G)\succ 0$ 
ensures the local stability of the closed-loop system with the control gain $K=YW^{-1}$, for any initial condition in $x \in \mathcal E (W^{-1},1)$. 

Since $\varepsilon > 1$, one gets $\mathcal E(W^{-1},\varepsilon) \subseteq $ $ \mathcal E(W^{-1},1)$.
In order to prove that $\mathcal E(W^{-1},\varepsilon)$ is an attractor for the closed-loop system, it remains to demonstrate that $\mathcal E(W^{-1},\varepsilon)$ is invariant, i.e.  
$$x\in \mathcal E(W^{-1},\varepsilon)\quad \Rightarrow  \quad x^+\in \mathcal E(W^{-1},\varepsilon).$$ 
To do so, the satisfaction of $\Phi(\mathcal G)\succ0$ ensures that 
$$\begin{array}{lcl}
V(x^+)&\!\!\!\!\!=&\displaystyle \!\!V(x) +\underbrace{\mathcal L_1(x,w)}_{\leq 0}+\underbrace{\xi^{\!\top} \begin{bsmallmatrix}
    0 & (K+G)^{\!\top}T &0 \\
    \ast & 2T      &0 \\
    \ast&\ast&0
    \end{bsmallmatrix}\xi}_{\leq 0}\\
   &\!\!\!\!\!-& \!\!\!\!\! \bar\mu_1 (V(x) \!-\! \varepsilon^{-1}) \!\underbrace{- \bar\mu_2 (\lambda \!-\! w^{\!\top} w)}_{\leq 0} \!- \bar\mu_3 (1 \!-\!V(x))\\
    &\!\!\!\!\!\leq &\!\!\displaystyle (1-\bar\mu_1+\bar\mu_3) V(x) +\!\bar\mu_1 \varepsilon^{-1}-\!\bar \mu_3 \\
\end{array}
$$
Since $\bar\mu_1=\mu$ is in $(0,1)$, there exists a sufficiently small $\bar\mu_3$ such that $1-\bar\mu_1+\bar\mu_3>0$ and, consequently, as $V(x)\leq \varepsilon^{-1}$, we have 
$$
\begin{array}{lcl}
V(x^+)\!\leq\! (1\!-\!\bar\mu_1\!+\!\bar\mu_3)\varepsilon^{-1}\!+\!\bar\mu_1 \varepsilon^{-1}\!-\!\bar\mu_3\!=\!\varepsilon^{-1}\!-\!\bar\mu_3(1\!-\!\varepsilon^{-1}),
\end{array}
$$
which implies that $V(x^+)\!\leq\!\varepsilon^{-1}$, since $ \varepsilon> 1$.

The last step of the proof addresses the particular case when $w=0$ or equivalently $\lambda=0$. In this situation, $\varepsilon$ disappears from the condition and can be selected as large as possible, so that $\mathcal E(W^{-1},\varepsilon)$ shrinks to $\{0\}$.  Therefore, the solution to the system, initialized in $\mathcal E(W^{-1},1)$ converges to the origin.
\end{proof}

\begin{remark}
Note that $\Phi(\mathcal G)\succ 0$ is not an LMI due to the terms $\mu W$ and $\lambda \varepsilon \mu^{-1}$. Nevertheless, once the parameter $\mu$ is fixed a priori, the problem becomes an LMI and can easily be solved. A gridding on the parameter $\mu$ should be included to the LMI solver.
\end{remark}


\section{Local data-driven control design}
\label{sec:main}
The objective of this section is to transform the previous model-based theorem into a data-based one. The basic idea is to use the inherent robustness property of the previous theorem with respect to $(A,B)$, to derive such a transformation. After presenting the main features and assumptions on the data experiments, a new lemma is presented to achieve this transformation in a systematic manner. An application of this lemma to the problem in hand  finally allows us to exhibit the main result of this paper.

\subsection{Data collections and assumption}

Unlike the usual situation where the system model is available, our objective is here to provide a data-driven-based design result. 
To do this, let us first specify the notion of data, we are considering here. We define the following data matrices that collect the available measurement 
for the control design. 
\begin{equation}\label{def_data}
\begin{array}{lclcllcr}
\mathcal X^+&\!\!:=&\![&\! x^+_1& x^+_2&\dots& x^+_p &\!\!],\\
\mathcal X&\!\!:=&\![&\! x_1&x_2&\dots& x_p&\!\!],\\
\mathcal U&\!\!:=&\![&\!\sat(u_1)&\sat(u_2)&\dots&\sat(u_p)&\!\!],\\
\omega&\!\!:=&\![&\! w_1&w_2&\dots& w_p&\!\!].\\
\end{array}
\end{equation}
We define the data collection $\mathcal D$ as follows
\begin{equation}\label{def:data}
    \mathcal D:=(\mathcal X^+,\mathcal X,\mathcal U) \in \mathbb{R}^{n_x \times p} \times \mathbb{R}^{n_x \times p} \times \mathbb{R}^{n_u \times p}.
\end{equation}
According the systems dynamics, these data verify
\begin{equation}\label{eq:info_data}
\mathcal X^+=A \mathcal X+B\mathcal U  +\omega\quad \in\mathbb R^{n_x\times p}.
\end{equation}
for some matrices $(A,B)$. The noise samples $w(0),w(1),$ $\dots,w(p)$ collected in the matrix $\omega$ are unknown. However, the following assumption is made on the ``energy'' of the noise, following the presentation of \cite{bisoffi2022data}.

\begin{assumption}\label{Assumption2}
Assume that there exists a known matrix $\Delta_\omega$ in $\mathbb S^{n_x}_+$ such that the noise samples matrix $\omega$ verify
\begin{equation}\label{def:Omega}
\omega \in\Omega :=\{v \in\mathbb R^{n_x\times p},\ v  v ^{\!\top}\leq p\lambda \Delta_\omega  \}.
\end{equation}
\end{assumption}

Observe in this assumption that matrix $\Delta_\omega$ is independent of the noise amplitude $\lambda$ and of the number of experiments $p$. This assumption can be related to the co-variance of the noise $\omega$ as mentioned in \cite{bisoffi2022data} or \cite{vanWaarde2020noisy}.

The authors of \cite{bisoffi2022data} have interestingly presented an alternative formulation of $\Omega$ using the data collection $\mathcal D$. This method that was also developed in many other papers such as, for example, \cite{vanWaarde2020noisy}, \cite{van2020data},
consists in redefining the following set 
\begin{equation}\label{def:uncertainty}
\mathcal C\! :=\!\{[A  \ B],\ \mathcal X^+\!\!=\!A \mathcal X\!+\!B\mathcal U  \!+\!\omega,\ \forall \omega\in\Omega  \}\subset \mathbb R^{n_x\times (n_x+n_u)} .
\end{equation}
Thus, an alternative representation of $\mathcal C$ can be derived as noticed in \cite{bisoffi2022data}, as follows
\begin{equation}\label{def:uncertaintybis}
\mathcal C \!\!:=\!\left\{\![A  \ B], \begin{bsmallmatrix}
I\\
A^{\!\top}\\
B^{\!\top}
\end{bsmallmatrix}^{\!\!\top}\!\! 
\begin{bsmallmatrix}
\mathcal X^+{\mathcal X^+}^{\!\top}\!\!-p\lambda \Delta_\omega  & -\mathcal X^+\mathcal X^{\!\top}  & -\mathcal X^+\mathcal U^{\!\top} \\
\ast & \mathcal X\mathcal X^{\!\top}  & \mathcal X\mathcal U^{\!\top}\\
\ast & \ast &  \mathcal U\mathcal U^{\!\top}\\
\end{bsmallmatrix}\!
\begin{bsmallmatrix}
I\\
A^{\!\top}\\
B^{\!\top}
\end{bsmallmatrix} \!\!\preceq\!0\!\right\}
\end{equation}


We classically consider that the data are informative as defined below.
 \begin{definition}  The data collection $\mathcal D$ in \eqref{def_data} is said informative  if  matrix $\begin{bsmallmatrix}
\mathcal X\\
\mathcal U\\
\end{bsmallmatrix}\begin{bsmallmatrix}
\mathcal X\\
\mathcal U\\
\end{bsmallmatrix}^{\!\top}$ is non singular.
 \end{definition}


\subsection{Matrix-constrained Relaxation} 
\label{Sec:Prel}
The following lemma, which is 
the main brick of the paper, presents a generic method to transform a problem of a particular matrix inequality which depends on parameters verifying a quadratic constraint into a formulation that is independent of these parameters. It is stated below.

\begin{lemma}\label{lem0}
For given positive integers $n_1,n_2,n_3$, consider matrices $(\mathcal M_1,\mathcal M_2,\mathcal M_3)$ in $\mathbb S^{n_1}_+  \times  \mathbb R^{n_1\times n_2} \times  \mathbb S^{n_3}_+$ and $(\mathcal N_1,\mathcal N_2,\mathcal N_3)$ in $\  S^{n_3}  \times  \mathbb R^{n_3\times n_2} \times  \mathbb S^{n_3}_+$, i.e. with $\mathcal N_3\succ 0$. 

Then, the following statements are equivalent
\begin{itemize}
\item[\textit{(i)}] Inequality 
\begin{equation}\label{lem0_1}
\mathcal M(\mathcal A)=\begin{bmatrix}
\mathcal M_{1} & 
\mathcal M_{2}\mathcal  A\\
\ast & \mathcal M_{3}
\end{bmatrix}\succ 0,\quad \forall \mathcal A\in  \Sigma_\mathcal N,
\end{equation}
holds true where $ \Sigma_\mathcal N$ represents the set of allowable uncertain matrices $\mathcal A$ characterized by a quadratic constraint defined as follows
 \begin{equation}\label{def_Sigma}
\!\!\!\! \Sigma_\mathcal N\!:=\!\left\{
 \mathcal A \in\mathbb R^{n_2\times n_3} , 
 \left[\begin{matrix}
I_{n_3}\\
\mathcal A\\
\end{matrix}\right]^{\!\top}\!\!  \underbrace{\begin{bmatrix}
 \mathcal N_1&\!\!\mathcal N_2\\
 \ast &\!\!\mathcal N_3\\
\end{bmatrix}}_{\mathcal N}\!\left[\begin{matrix}
I_{n_3}\\
\mathcal A\\
\end{matrix}\right]\!\preceq \!0.
 \right\}
 \end{equation}

\item[\textit{(ii)}] There exists $\eta>0$ such that 
\begin{equation}\label{lem:IneqY}
\begin{bmatrix}
\mathcal M_1 & \boldsymbol{0}_{n_1,{n_3}} & \mathcal M_{2} \\
\ast & \mathcal M_3+\eta\mathcal N_1& \eta\mathcal N_2\\ 
\ast & \ast &\eta\mathcal N_3 
\end{bmatrix}\succ 0.
\end{equation}
\end{itemize}
\end{lemma}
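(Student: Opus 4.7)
My plan is to prove the two directions separately since they differ drastically in difficulty: (ii) $\Rightarrow$ (i) is a one-line congruence calculation, whereas (i) $\Rightarrow$ (ii) is the ``lossless'' direction that requires a Petersen/S-procedure style result. For (ii) $\Rightarrow$ (i), I would introduce the full-column-rank matrix
\[
\Pi_{\mathcal{A}} = \begin{bmatrix} I_{n_1} & \boldsymbol{0}\\ \boldsymbol{0} & I_{n_3}\\ \boldsymbol{0} & \mathcal{A}\end{bmatrix}
\]
and compute the congruence of the matrix in (\ref{lem:IneqY}), call it $M_\eta$, with $\Pi_{\mathcal{A}}$. A direct block multiplication yields
\[
\Pi_{\mathcal{A}}^{\!\top} M_\eta \Pi_{\mathcal{A}} = \mathcal{M}(\mathcal{A}) + \eta \begin{bmatrix} \boldsymbol{0} & \boldsymbol{0}\\ \boldsymbol{0} & Q(\mathcal{A})\end{bmatrix},
\]
where $Q(\mathcal{A}) = \left[\begin{smallmatrix} I_{n_3}\\ \mathcal{A}\end{smallmatrix}\right]^{\!\top}\! \mathcal{N} \left[\begin{smallmatrix} I_{n_3}\\ \mathcal{A}\end{smallmatrix}\right]$. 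For any $\mathcal{A}\in\Sigma_{\mathcal{N}}$ we have $Q(\mathcal{A})\preceq 0$, and since $\eta>0$ the additive block on the right is $\preceq 0$; combined with $M_\eta \succ 0$ (which makes $\Pi_{\mathcal{A}}^{\!\top} M_\eta \Pi_{\mathcal{A}} \succ 0$ by the full column rank of $\Pi_{\mathcal{A}}$), this immediately gives $\mathcal{M}(\mathcal{A}) \succ 0$.

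For the converse (i) $\Rightarrow$ (ii), the plan is to reduce the robust inequality to a problem to which Petersen's lemma (equivalently, the lossless single-multiplier S-procedure) applies. Since $\mathcal{N}_3 \succ 0$, completing the square in the defining quadratic constraint rewrites $\Sigma_{\mathcal{N}}$, in the shifted variable $\hat{\mathcal{A}} = \mathcal{A} + \mathcal{N}_3^{-1}\mathcal{N}_2^{\!\top}$, as a pure norm-type bound $\hat{\mathcal{A}}^{\!\top}\mathcal{N}_3\hat{\mathcal{A}} \preceq \mathcal{N}_2\mathcal{N}_3^{-1}\mathcal{N}_2^{\!\top} - \mathcal{N}_1$. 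This brings (i) into the canonical form $\mathcal{M}_0 + H F E + E^{\!\top} F^{\!\top} H^{\!\top} \succ 0$ for all contractions $F$, to which Petersen's lemma applies and delivers a scalar $\eta>0$ together with an explicit $F$-free LMI. Undoing the change of variables together with a reverse Schur complement on this LMI then reconstructs exactly the block matrix in (\ref{lem:IneqY}).

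The delicate point, and therefore the main obstacle, is that the matrix S-procedure is only known to be lossless under a Slater-type regularity assumption, so a short case distinction is needed. When $\mathcal{N}_1 - \mathcal{N}_2\mathcal{N}_3^{-1}\mathcal{N}_2^{\!\top} \succ 0$ the set $\Sigma_{\mathcal{N}}$ is empty and (i) is vacuous, so (ii) has to be produced directly by exhibiting $\eta>0$ (a large $\eta$ works by a Schur-complement argument on the block matrix); otherwise the Petersen argument applies as sketched, possibly after a small perturbation of the data to enforce strict feasibility and a limit to return to the original matrices. Handling this case split cleanly, and in particular checking that the multiplier $\eta$ extracted from Petersen's lemma is precisely the one that appears in (\ref{lem:IneqY}) after undoing the change of variables, is the only technical subtlety.
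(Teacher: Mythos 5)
Your direction \textit{(ii)}$\Rightarrow$\textit{(i)} is exactly the paper's argument: congruence of \eqref{lem:IneqY} with $\bigl[\begin{smallmatrix}I_{n_1}&0&0\\0&I_{n_3}&\mathcal A^{\!\top}\end{smallmatrix}\bigr]^{\!\top}$, which produces $\mathcal M(\mathcal A)+\eta\,\Diag(0,Q(\mathcal A))\succ0$ and hence $\mathcal M(\mathcal A)\succ0$ on $\Sigma_{\mathcal N}$. For the hard direction \textit{(i)}$\Rightarrow$\textit{(ii)} you take a genuinely different route. The paper stays inside the quadratic-constraint formulation: it embeds the constraint on $\mathcal A$ into a Schur-complemented inequality parametrized by two free matrices $\mathcal R$ and $\mathcal Z$, invokes a (lossless) S-procedure to extract a single multiplier $\eta$, and only then fixes $\eta\mathcal R=\mathcal M_3-\epsilon I$ and $\eta\mathcal Z=\mathcal M_2$ so that the $\mathcal A$-dependent block cancels. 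You instead complete the square using $\mathcal N_3\succ0$, rewrite $\Sigma_{\mathcal N}$ as a contraction-type uncertainty $\hat{\mathcal A}=\mathcal N_3^{-1/2}F\mathcal Q^{1/2}$ with $\mathcal Q=\mathcal N_2\mathcal N_3^{-1}\mathcal N_2^{\!\top}-\mathcal N_1$, and apply Petersen's lemma; I checked that the resulting $F$-free inequality $\bigl[\begin{smallmatrix}\mathcal M_1-\eta^{-1}\mathcal M_2\mathcal N_3^{-1}\mathcal M_2^{\!\top}&-\mathcal M_2\mathcal N_3^{-1}\mathcal N_2^{\!\top}\\ \ast&\mathcal M_3+\eta\mathcal N_1-\eta\mathcal N_2\mathcal N_3^{-1}\mathcal N_2^{\!\top}\end{smallmatrix}\bigr]\succ0$ is precisely the Schur complement of \eqref{lem:IneqY} with respect to its $\eta\mathcal N_3$ block, with the same $\eta$, so your reconstruction claim is correct. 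Your route has a concrete advantage: once $\mathcal Q\succeq0$ the contraction set automatically contains $F=0$, so Petersen's lemma needs no separate Slater-type hypothesis, whereas the paper's appeal to an ``equivalent'' S-procedure is exactly the point where losslessness is silently assumed. This is consistent with the paper's own remark that Lemma \ref{lem0} should be relatable to Petersen's lemma; you make that relation the proof.

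One caveat on your degeneracy discussion. Your case split is not exhaustive: the set $\Sigma_{\mathcal N}$ is empty whenever $\mathcal Q\not\succeq0$, not only when $\mathcal Q\prec0$, and in the intermediate case ($\mathcal Q$ indefinite) neither your Petersen reduction (no real $\mathcal Q^{1/2}$) nor your ``take $\eta$ large'' fallback applies — indeed $\mathcal M_3+\eta\mathcal N_1$ then acquires a negative eigenvalue for large $\eta$ while small $\eta$ may violate the $(1,1)$ Schur block, and one can build data for which \textit{(i)} holds vacuously but \textit{(ii)} fails. So the equivalence as stated really requires $\Sigma_{\mathcal N}\neq\emptyset$ (equivalently $\mathcal Q\succeq0$). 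This is a defect of the lemma's statement that the paper's proof also passes over in silence, so it does not count against you relative to the paper; but you should state nonemptiness of $\Sigma_{\mathcal N}$ as a standing assumption rather than attempt to prove \textit{(ii)} directly in the empty case.
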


The proof of this lemma is postponed to the appendix.

Lemma \ref{lem0} provides an alternative formulation in robust analysis for uncertain matrices subject to quadratic constraints of the form \eqref{def_Sigma} compared to the one presented in \cite{vanWaarde2020noisy,van2020data}. For the sake of consistency, the S-Lemma provided in \cite{vanWaarde2020noisy} is recalled in the following lemma.
\begin{lemma}{\cite[Th.9]{vanWaarde2020noisy}}\label{lem:Waarde}
Let $\mathcal M_s,  \mathcal N_s \in \mathbb R^{(n_s+m_s)\times(n_s+m_s)}$ be symmetric matrices and assume that inequality
$
\begin{bsmallmatrix}
I_{n_s}\\
\mathcal A_s^{\!\top}
\end{bsmallmatrix}^{\!\top}  \mathcal N_s \begin{bsmallmatrix}
I_{n_s}\\
\mathcal A_s^{\!\top}
\end{bsmallmatrix}\succeq0
$ for at least one matrix $\mathcal A_s \in \mathbb R^{n_s\times m_s}$. Then the next statements are equivalent:
\begin{enumerate}
\item[(i)] $\begin{bmatrix}
I_{n_s}\\
\mathcal A_s^{\!\top}
\end{bmatrix}^{\!\top} \mathcal M_s \begin{bmatrix}
I_{n_s}\\
\mathcal A_s^{\!\top}
\end{bmatrix}\succ0,\ \forall \mathcal A_s\in \Sigma_{\mathcal N_s} 
$, \newline where set $\Sigma_{\mathcal N_s}$ has the same definition as in \eqref{def_Sigma} but replacing $\mathcal A$  and $\mathcal N$ by $\mathcal A_s$ and $\mathcal N_s$, respectively.
\item [(ii)] There exists $\eta > 0$ such that $\mathcal M_s - \eta  \mathcal N_s \succ 0$.
\end{enumerate}
\end{lemma}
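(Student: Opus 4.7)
The plan is to prove the two implications separately. The direction $(ii)\Rightarrow(i)$ follows from a direct block-congruence, whereas the harder direction $(i)\Rightarrow(ii)$ is obtained by Schur-complementing (using $\mathcal M_1\succ 0$) to bring the problem into the canonical matrix S-procedure form of Lemma~\ref{lem:Waarde}, and then undoing the Schur complement.

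For $(ii)\Rightarrow(i)$, I would fix $\mathcal A\in\Sigma_\mathcal N$ and introduce the full-column-rank map $T(\mathcal A)=\left[\begin{smallmatrix}I_{n_1}&0\\0&I_{n_3}\\0&\mathcal A\end{smallmatrix}\right]$. A routine block expansion shows that the congruence $T(\mathcal A)^{\!\top}(\cdot)T(\mathcal A)$ applied to the left-hand side of \eqref{lem:IneqY} is a $2\times 2$ block matrix with $(1,1)$ block $\mathcal M_1$, $(1,2)$ block $\mathcal M_2\mathcal A$, and $(2,2)$ block $\mathcal M_3+\eta\left[\begin{smallmatrix}I_{n_3}\\\mathcal A\end{smallmatrix}\right]^{\!\top}\!\mathcal N\!\left[\begin{smallmatrix}I_{n_3}\\\mathcal A\end{smallmatrix}\right]$. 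The bracketed quadratic form is $\preceq 0$ by definition of $\Sigma_\mathcal N$, so this matrix is $\preceq\mathcal M(\mathcal A)$ in the Loewner order; since it is itself strictly positive definite (congruence of a PD matrix by the full-column-rank $T$), so is $\mathcal M(\mathcal A)$.

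For $(i)\Rightarrow(ii)$, the assumption $\mathcal M_1\succ 0$ allows me to Schur-complement $\mathcal M(\mathcal A)\succ 0$ into $\mathcal M_3-\mathcal A^{\!\top}\mathcal M_2^{\!\top}\mathcal M_1^{-1}\mathcal M_2\mathcal A\succ 0$, which rewrites as the matrix-quadratic form $\left[\begin{smallmatrix}I_{n_3}\\\mathcal A\end{smallmatrix}\right]^{\!\top}\widetilde{\mathcal M}\left[\begin{smallmatrix}I_{n_3}\\\mathcal A\end{smallmatrix}\right]\succ 0$ with $\widetilde{\mathcal M}=\Diag(\mathcal M_3,-\mathcal M_2^{\!\top}\mathcal M_1^{-1}\mathcal M_2)$. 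Under the identifications $\mathcal A_s=\mathcal A^{\!\top}$ and $\mathcal N_s=-\mathcal N$, the constraint $\mathcal A\in\Sigma_\mathcal N$ matches the $\succeq 0$ hypothesis of Lemma~\ref{lem:Waarde}, and its conclusion produces $\eta\geq 0$ with $\widetilde{\mathcal M}+\eta\mathcal N\succ 0$. But $\widetilde{\mathcal M}+\eta\mathcal N$ is exactly the Schur complement with respect to $\mathcal M_1$ of the $3\times 3$ block in \eqref{lem:IneqY}; unfolding this Schur complement then returns \eqref{lem:IneqY}. Strict positivity $\eta>0$ follows from the fact that the $(3,3)$ block $\eta\mathcal N_3$ must be strictly PD; combined with $\mathcal N_3\succ 0$, any $\eta=0$ solution can be perturbed to $\eta>0$ by continuity of strict matrix inequalities.

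The main obstacle, beyond routine bookkeeping, is the notational reconciliation with Lemma~\ref{lem:Waarde}, which is stated with $\mathcal A_s^{\!\top}$ in the lower block and a $\succeq 0$ Slater-type constraint, whereas the present lemma uses $\mathcal A$ and $\preceq 0$. The transpose/sign dictionary $\mathcal A_s=\mathcal A^{\!\top}$, $\mathcal N_s=-\mathcal N$ must be tracked consistently throughout, and one has to note that the feasibility hypothesis of Lemma~\ref{lem:Waarde} (nonempty $\Sigma_\mathcal N$) is an implicit standing assumption in the applications of interest.
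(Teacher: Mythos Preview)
Your proposal is a proof of Lemma~\ref{lem0}, not of Lemma~\ref{lem:Waarde}. The displayed statement is Lemma~\ref{lem:Waarde}, which the paper merely cites from \cite{vanWaarde2020noisy} and never proves; every object you manipulate ($\mathcal M(\mathcal A)$, $\mathcal M_1,\mathcal M_2,\mathcal M_3$, inequality~\eqref{lem:IneqY}, the set $\Sigma_\mathcal N$) belongs to Lemma~\ref{lem0}, and you invoke Lemma~\ref{lem:Waarde} as a black box in the hard direction. Taking the intended target to be Lemma~\ref{lem0}, your argument is correct but follows a genuinely different route from the paper's appendix proof in the $(i)\Rightarrow(ii)$ direction.

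For $(ii)\Rightarrow(i)$ your congruence by $T(\mathcal A)$ coincides with the paper's argument (the paper writes the multiplier as $\begin{bsmallmatrix}I_{n_1}&0&0\\0&I_{n_3}&\mathcal A^{\!\top}\end{bsmallmatrix}$). For $(i)\Rightarrow(ii)$ the paper does \emph{not} reduce to Lemma~\ref{lem:Waarde}; it works directly. It introduces an auxiliary $\mathcal R\in\mathbb S^{n_3}$ chosen so that $\begin{bsmallmatrix}\mathcal R+\mathcal N_1&\mathcal N_2\\ \ast&\mathcal N_3\end{bsmallmatrix}\succ 0$ (possible since $\mathcal N_3\succ 0$), Schur-complements the membership constraint $\mathcal A\in\Sigma_\mathcal N$ into a $2\times 2$ block inequality, pre- and post-multiplies by $\begin{bsmallmatrix}0&\mathcal Z&0\\0&0&I\end{bsmallmatrix}$ with a free slack $\mathcal Z$, combines this with $\mathcal M(\mathcal A)\succ 0$ via a scalar S-procedure, and finally eliminates the $\mathcal A$-dependence by the choice $\eta\mathcal Z=\mathcal M_2$, $\eta\mathcal R=\mathcal M_3-\epsilon I$. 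Your route (Schur on $\mathcal M_1$, invoke the matrix S-Lemma, Schur back) is shorter and makes transparent the paper's own remark that Lemma~\ref{lem0} is a special case of Lemma~\ref{lem:Waarde}; the paper's route is self-contained and exposes the slack variables $\mathcal R,\mathcal Z$ explicitly. One minor point: Lemma~\ref{lem:Waarde} as stated already delivers $\eta>0$, so your closing perturbation argument to upgrade $\eta\ge 0$ to $\eta>0$ is unnecessary.
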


Note that both lemmas address the problem of the satisfaction of an inequality subject to uncertain matrices characterized by a quadratic constraint. The main interest of both lemmas is to derive equivalent inequalities that are independent of the uncertain matrix $\mathcal A$ (or $\mathcal A_s$). Finally, both lemmas can be seen as application of the usual manipulations on LMI such as Schur Complement, Finsler's lemma and S-procedure.  Apart from presenting these similarities, both lemmas have substantial differences. First, Lemma \ref{lem:Waarde} requires that matrix $\mathcal M_s$ has the same size as $\mathcal N_s$, the matrix that characterizes the quadratic constraint on $\mathcal A$. Lemma~\ref{lem0} is more flexible in this sense, as there is no relationship between matrices $\mathcal M_1$ and $\mathcal N$, which are independent. This flexibility has the benefit of reducing the initial manipulations to derive, from usual stability or control problems, the appropriate expression of $\mathcal M_i$ and $\mathcal N_i$, for $i=1,2,3$, to fit the framework of Lemma~\ref{lem:Waarde}. In fact, the relationship between both lemmas can be seen by selecting $n_s=n_1+n_3$, $m_s= n_2$, $\mathcal A_s^{\!\top} =\begin{bmatrix} 0_{n_2\times n_3}& \mathcal A^{\!\top} \end{bmatrix}$ and 
$$
\begin{array}{c}
\mathcal M_s=\begin{bmatrix}
\mathcal M_1 &
0
&\mathcal M_2\\
\ast &  \mathcal M_3 &0\\
\ast&\ast&0\\
\end{bmatrix}, \quad \mathcal N_s= -\begin{bmatrix}
0&0 &0\\
\ast &\ \ \mathcal N_1 &  \mathcal N_2\\
\ast&\ast& \mathcal N_3\\
\end{bmatrix}.
\end{array}
$$
From this selection, it is clear that item (ii) of Lemma \ref{lem:Waarde} is equivalent to item (ii) of Lemma~\ref{lem0}, showing that Lemma~\ref{lem0} is a particular case of Lemma~\ref{lem:Waarde}. That being said, the main advantage of Lemma~\ref{lem0} is that the structure of $\mathcal M(\mathcal A)$ arises in many LMI problems in control as it will be shown in the next section. Therefore it avoids one from enforcing an initial LMI problem to fit with the structure of Lemma \ref{lem:Waarde}, which is not an easy task in general.

\begin{remark}
A similar lemma was already presented in \cite{Seuret2022Data:SAS}, where block $\mathcal M_2\mathcal A$ is extended to $\bar{\mathcal M}_{2}+\mathcal M_2\mathcal A$, where $\bar{\mathcal M}_{2}$ is a term that is independent of the uncertain matrix. In addition, a deeper discussion on the similarities with the existing lemmas from the literature has been proposed therein. 
\end{remark}
\subsection{Data-driven local stabilization of saturated systems}
\label{sec:data-theorem}

This section provides a new contribution on the data-based design of stabilizing control law for linear systems subject to input saturation. The method is highly inspired from 
\cite{vanWaarde2020noisy} but has been reformulated to get a simpler and user-friendly formulation. In this paragraph, we will demonstrate how Lemma \ref{lem0} can be easily applied to the stabilization problem of saturated systems. The robust local stabilization of system \eqref{eq:model_x} can be formalized in the following theorem. 

\begin{theorem}\label{th:theo_data-robust}
Under Assumptions \ref{ass:noise} and \ref{Assumption2}, i.e. for a given matrix $\Delta_w>0$, and for given $\mu \in (0,1)$, $\alpha_1>0$ and $\alpha_2>0$, assume that there exist 
$$\begin{array}{lcl}
\mathcal {D}_{\mathcal V}^2&:=&\{ \varepsilon,\eta, W,S,Y,Z\} \\
&\in & \mathbb R_{>0} \!\times\! \mathbb R_{>0}
\!\times\! \mathbb S_{n_x}
\!\times\! \mathbb D_{n_x}\!\times\! {\mathbb R^{{n_x}\!\times\! {n_u}}} \!\times\! {\mathbb R^{{n_x}\!\times\! {n_u}}}
\end{array}$$
that are solution to the following optimization problem
\begin{equation}\label{opt2}
    \begin{array}{lcl}
\max_{\mathcal {D}_{\mathcal V}^2} & \alpha_1 \varepsilon +\alpha_2 Tr(W)\\
s.t. & \varepsilon>1, \ \eta>0,\  \Psi(\mathcal D)\succ0,\ \begin{bmatrix}
W & Z_{(i)}^{\!\top}\\
\ast & \bar u_{i}^2 \\ 
\end{bmatrix}\succ0, 
\end{array}    
\end{equation}
for all $i=1,\dots, {n_u}$, where
\begin{equation}
    \begin{array}{rcl}
    \Psi(\mathcal D)&\!\!\!\!=&\!\!\!\!	\begin{bmatrix}	
(1\!-\!\mu)	W & \!\!\!Y^{\!\top}\!\!+\!Z^{\!\top} &\!\!\!\!\!0  &\!\!\!\!W&\!\!\!\! Y^{\!\top}\\
\ast& \!\!\!2S & \!\!\!\!\!0& \!\!\!\!0& \!\!\!\!\! S\\
\ast&\!\!\!\ast& \!\!\!\!\! {\Psi}_3& \!\!-\eta\mathcal X^+\mathcal X^{\!\top} &\!\! -\eta\mathcal X^+\mathcal U^{\!\top}\\ 
\ast & \!\!\! \ast&\!\!\!\!\!\ast& \!\! \eta\mathcal X\mathcal X^{\!\top} & \!\!\eta\mathcal X\mathcal U^{\!\top}\\
\ast & \!\!\!\ast&\!\!\!\!\ast&\!\! \ast&\!\! \eta\mathcal U\mathcal U^{\!\top}
     \end{bmatrix}\\
   {\Psi}_3&\!\!\!\!\!=&\! W-\frac{\lambda\varepsilon}{\mu} I_{n_x}+\eta\mathcal X^+{\mathcal X^+}^{\!\top}-\eta p \lambda \Delta_\omega\\
    \end{array}
\end{equation}
Then, the control law \eqref{ControlLaw} with $K=YW^{-1}$ ensures that Problems (P1) and (P2) are solved, or equivalently, the following statements hold:
\begin{itemize}
    \item When $w=0$, the ellipsoid $\mathcal E(W^{-1},1)$ is an approximation of the basin of attraction of the origin for the closed-loop system \eqref{eq:model_x}-\eqref{ControlLaw};
\item When $w\in\Omega_\lambda $ and $w\neq 0$, the solutions to the closed-loop system \eqref{eq:model_x}-\eqref{ControlLaw} initialized in $\mathcal E(W^{-1},1) \setminus \mathcal E(W^{-1},\varepsilon)$ converge to the attractor $\mathcal E(W^{-1},\varepsilon)$.
\item The estimation of the basin of attraction and of the attractor are optimal with respect to the cost function $\alpha_1 \varepsilon +\alpha_2 Tr(W)$, leading to a compromise between enlarging $\mathcal E(W^{-1},1)$ and minimizing $\mathcal E(W^{-1},\varepsilon)$.
\end{itemize}
\end{theorem}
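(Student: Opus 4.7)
The plan is to transfer Theorem~\ref{th1} to the data-driven setting by requiring its main inequality $\Phi(\mathcal G)\succ 0$ to hold robustly for every $(A,B)\in\mathcal C$ compatible with the data, and then using Lemma~\ref{lem0} to convert this robust, model-dependent requirement into the model-free inequality $\Psi(\mathcal D)\succ 0$. Since the true (unknown) pair generating the data belongs to $\mathcal C$ by Assumption~\ref{Assumption2}, the robust satisfaction automatically implies $\Phi(\mathcal G)\succ 0$ for the actual model, and the three bullet items of Theorem~\ref{th1} are then inherited verbatim.

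The first step is to isolate where $(A,B)$ enters $\Phi(\mathcal G)$. Only the last block-column is affected, and it factors as $\mathcal M_2\mathcal A$ with
$$\mathcal A=\begin{bmatrix}A^{\!\top}\\ B^{\!\top}\end{bmatrix}\in\mathbb R^{(n_x+n_u)\times n_x},\qquad \mathcal M_2=\begin{bmatrix}W & Y^{\!\top}\\ 0 & S\end{bmatrix}.$$
Reading off the remaining blocks gives $\mathcal M_1=\left[\begin{smallmatrix}(1-\mu)W & Y^{\!\top}+Z^{\!\top}\\ \ast & 2S\end{smallmatrix}\right]$ and $\mathcal M_3=W-\frac{\lambda\varepsilon}{\mu}I_{n_x}$, so that $\Phi(\mathcal G)$ is exactly the template $\mathcal M(\mathcal A)$ of \eqref{lem0_1} with $n_1=n_2=n_x+n_u$ and $n_3=n_x$.

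The second step is to match the constraint $(A,B)\in\mathcal C$ against the quadratic description \eqref{def_Sigma}. A direct comparison of \eqref{def:uncertaintybis} with \eqref{def_Sigma} gives
$$\mathcal N_1=\mathcal X^+{\mathcal X^+}^{\!\top}-p\lambda\Delta_\omega,\quad \mathcal N_2=\begin{bmatrix}-\mathcal X^+\mathcal X^{\!\top} & -\mathcal X^+\mathcal U^{\!\top}\end{bmatrix},\quad \mathcal N_3=\begin{bmatrix}\mathcal X\mathcal X^{\!\top} & \mathcal X\mathcal U^{\!\top}\\ \ast & \mathcal U\mathcal U^{\!\top}\end{bmatrix},$$
and $\mathcal N_3\succ 0$ by informativity of the data. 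Substituting all these $\mathcal M_i$ and $\mathcal N_i$ blocks into the inequality \eqref{lem:IneqY} from item (ii) of Lemma~\ref{lem0} produces exactly $\Psi(\mathcal D)\succ 0$. Hence $\Psi(\mathcal D)\succ 0$ is equivalent to $\Phi(\mathcal G)\succ 0$ for every $(A,B)\in\mathcal C$, and in particular for the true model.

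Finally, together with the $n_u$ auxiliary LMIs in \eqref{opt2} that (by Schur complement) enforce the inclusion $\mathcal E(W^{-1},1)\subset\mathcal S(ZW^{-1})$ as in the model-based proof, a direct invocation of Theorem~\ref{th1} applied to the true pair $(A,B)$ delivers the inner approximation of the basin of attraction, the outer approximation of the attractor, and the trade-off interpretation of the cost. I expect the main obstacle to be the structural pattern-matching in the first step: one has to verify that $\Phi(\mathcal G)$ genuinely fits the template $\mathcal M(\mathcal A)$, i.e.\ that the $\mathcal M_2\mathcal A$ block reproduces the column $[WA^{\!\top}+Y^{\!\top} B^{\!\top};\,SB^{\!\top}]$ exactly and that no residual $(A,B)$-dependent term is scattered elsewhere in $\Phi(\mathcal G)$. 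Once this structural check passes, the rest is a direct substitution into Lemma~\ref{lem0}.
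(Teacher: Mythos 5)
Your proposal is correct and follows essentially the same route as the paper: identify that only $\Phi(\mathcal G)\succ0$ depends on $(A,B)$, recognize that $\Phi(\mathcal G)$ matches the template $\mathcal M(\mathcal A)$ of Lemma~\ref{lem0} with $\mathcal A=[A^{\!\top};B^{\!\top}]$ and the stated $\mathcal M_i$, read off the $\mathcal N_i$ from the quadratic description \eqref{def:uncertaintybis} of $\mathcal C$ (with $\mathcal N_3\succ0$ from informativity), and apply the lemma to obtain $\Psi(\mathcal D)\succ0$ as the model-free equivalent, after which Theorem~\ref{th1} applied to the true pair in $\mathcal C$ yields the three conclusions. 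Your explicit structural check that no residual $(A,B)$-dependence is scattered outside the $\mathcal M_2\mathcal A$ block is exactly the key point the paper's proof relies on.
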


\begin{proof}
Let us first note that, in Theorem \ref{th1}, 
conditions $\varepsilon > 1$ and $\begin{bsmallmatrix}
W & Z_{(i)}^{\!\top}\\
\ast & \bar u_{i} \\ 
\end{bsmallmatrix}\succ0$ do not involve the matrices of the system and thus remains unchanged. On the other side, condition $\Phi(\mathcal G)\succ0$ has to be modified in order to remove the system's matrices. To do so, let us note that $\Phi(\mathcal G)$ is rewritten as follows
\begin{equation}
\Phi(\mathcal G)
=\begin{bmatrix}
\begin{bmatrix} (1-\mu)W & X^{\!\top}+ Y^{\!\top} \\
\ast& 2S\end{bmatrix}
& \begin{bmatrix}
 W & Y^{\!\top}\\ 
  0& S 
 \end{bmatrix}  \left[\begin{matrix}
A^{\!\top}\\
B^{\!\top}\\
 \end{matrix}\right]\\
\ast &  W -\frac{\lambda\varepsilon}{\mu} I_{n_x}
\end{bmatrix},
\end{equation}
which has the same structure as the one of matrix $\mathcal M(\mathcal A)$ introduced in Lemma~\ref{lem0}. Indeed, selecting $n_1=n_2={n_x}+{n_u}$, $n_3={n_x}$ and 
$$\begin{array}{l}
\mathcal M_1\!=\!\begin{bmatrix} \!(1\!-\!\mu)W & \!\!\!\!X^{\!\top}+Y^{\!\top} \\
\ast& \!\!\!\!2S\end{bmatrix}\!\!,\ \mathcal M_2\!=\!\begin{bmatrix} W & \!\!\!\!Y^{\!\top} \\
0& \!\!\!\!S\end{bmatrix}\!,\\ \mathcal M_3\!=\! W\!\!-\!\frac{\lambda\varepsilon}{\mu} I_{n_x},\\
 \mathcal N_1=\mathcal X^+{\mathcal X^+}^{\!\top}\!\!-\!p \lambda \Delta_\omega, \ \mathcal N_2\!=\!-\mathcal X^+\begin{bmatrix}
\mathcal X\\
\mathcal U\\
\end{bmatrix}^{\!\top}\!\!\!, \
\mathcal N_3\!=\!\begin{bmatrix}
\mathcal X\\
\mathcal U\\
\end{bmatrix}\begin{bmatrix}
\mathcal X\\
\mathcal U\\
\end{bmatrix}^{\!\top}\!\!\!.
\end{array}
$$
with the uncertainty matrix
$$
\mathcal A=\left[\begin{matrix}
A^{\!\top}\\
B^{\!\top}\\
\end{matrix}\right] \in\mathcal C\subset \mathbb R^{(n_x+n_u)\times n_x}.
$$
Altogether, the problem can be expressed as in \eqref{lem0_1} with this set of matrices. 

 Then, the problem resumes to the satisfaction of condition $\Phi(\mathcal G)\succ0$ for all matrices $[A\ B]^{\!\top}=\mathcal A \subset \mathcal C$. Note that $\mathcal N_3\succ 0$ is required in Lemma~\ref{lem0}, which refers to the informativity of the data and is also required for the satisfaction of condition $\Psi(\mathcal D)\succ0$. We are thus in position to apply Lemma~\ref{lem0}, which states that having $\Phi(\mathcal G)\succ 0$ for all $\mathcal A \in\mathcal C$ is equivalent to the existence of $\eta>0$ such that $\Psi(\mathcal D)\succ0$.
\end{proof}

\begin{remark}
Note that the informativity of the data is a necessary condition for the solvability of the LMI optimization problem. In other words, if matrix $\begin{bsmallmatrix}
\mathcal X\\
\mathcal U\\
\end{bsmallmatrix}\begin{bsmallmatrix}
\mathcal X\\
\mathcal U\\
\end{bsmallmatrix}^{\!\top}$ is singular, then the LMI problem cannot be solved.
\end{remark}
\begin{remark}
Note that, contrary to \cite{berberich2020robust}  or \cite{vanWaarde2020noisy}, our method does not require to study the 
dual system involving $(A+BK)^{\!\top}$. First, this trick is not possible for nonlinear systems. Second, our method directly treats the initial problem without using this artifice of calculus to avoid a technical problem.
\end{remark}

\begin{remark}
Differently from \cite{bisoffi2022data}, the additional variable $\eta$ cannot be ``absorbed" by the other decision variables. This due to the problem of local stabilization, for which the scalability of the Lyapunov matrix $W$ is not permitted since it appears multiplied by the scalar $\mu$.
\end{remark}

\begin{remark}
It is worth noting that the resulting LMI condition is very similar to the ones presented in  \cite{bisoffi2022data} based on the application of the Pertersen's lemma for the case of linear systems. Indeed, selecting $S=0$, $Y=Z=0$ and $\varepsilon=0$, the same LMI condition as the one of Th.2 in \cite{bisoffi2022data} is retrieved. This makes us think that Lemma \ref{lem0} can be related to the Petersen's lemma as well as we have showed the link with the S-Procedure of \cite{vanWaarde2020noisy}. Again, the advantage of Lemma \ref{lem0} over both these lemmas is that the structure of many LMI problems in control fits with the structure of Lemma \ref{lem0}, so that no additional manipulation is required.
\end{remark}

\section{Numerical application}\label{sec:sim}

Consider the discrete-time systems \eqref{sys} borrowed from \cite{tarbouriech1997admissible,vassilaki1988feedback}, with the following matrices
\begin{equation}
    A=\begin{bmatrix}
    0.8 & 0.5\\ -0.4 &1.2
    \end{bmatrix},\quad B=\begin{bmatrix}
    0\\ 1
    \end{bmatrix},\quad \bar u_1 = 5.
\end{equation}

\subsection{Model-based solution}
This section aims at illustrating the model-based stabilization result of Theorem \ref{th1}. Solving the optimization problem \eqref{opt1} with $\alpha_1=1$ and $\alpha_2=10^{-3}$ for $\lambda =0.01, 0.05$ and $0.1$ and $\mu=0.08,0.3$ and $0.6$. Figure \ref{fig:ex1:model} shows the estimation of the basin of attraction (dashed blue line) and of the attractor (black line) for all combinations of $(\lambda,\mu)$. In addition, each sub-figure also depicts $40$ simulations that are initiated at the boundary of $\mathcal E(W^{-1},1)$, which all converge the attractor $\mathcal E(W^{-1},\varepsilon)$. For the particular case $(\lambda,\mu)=(0.05,0.4)$, we have obtained $W=\begin{bsmallmatrix}
78.67 & -14.16\\ -14.16 & 27.09
\end{bsmallmatrix}$ and $\varepsilon=79.54$.

\begin{figure}[hhh!]
\hspace{-0.42cm}
\includegraphics[width=0.54\textwidth]{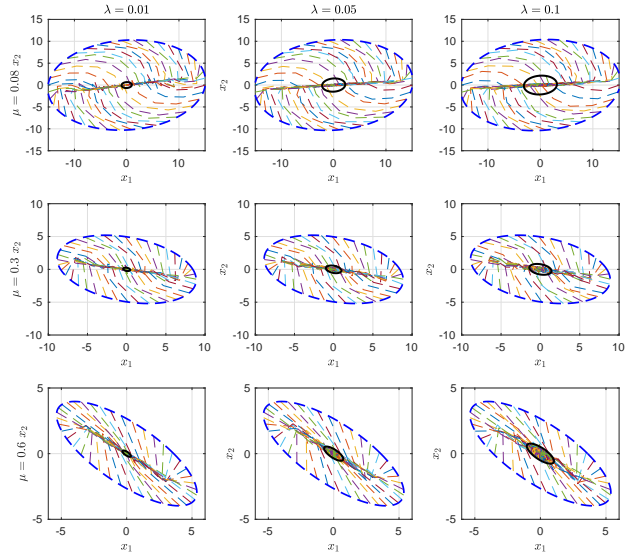}
\vspace{-1cm}
	\caption{Estimations of the basin of attraction and of the attractor obtained by solving the model-based optimization problem \eqref{opt1} for various $\lambda$ and $\mu$.}
	\label{fig:ex1:model}
\end{figure}

One can observe on Figure \ref{fig:ex1:model} that increasing the magnitude 
of the noise (i.e., $\lambda$), increases the size of the attractor, while the size of 
approximation of the basin of attraction remains almost the same. 
Interestingly, Figure \ref{fig:ex1:model} shows that the tuning parameter $\mu$ has an important effect on the solution. Indeed, for small values of $\mu$, the approximation of the basin of attraction is larger, but the approximation of the attractor is rather poor, since the black ellipsoids are very large compared to the chattering around the origin. Reversely, when $\mu$ is large, the optimization process provides a more accurate approximation of the attractor but at the price of drastically reducing the size of the approximation of the basin of attraction. This shows that the tuning parameter $\mu$ plays the role of the balance between the optimization of estimations of the basin of attraction and of the attractor, which cannot be performed simultaneously. This compromise can be expected from the LMI condition, since having $(1-\mu)$ small, implies that $W$ has to be large (block $(1,1)$ of $\Phi(\mathcal G)$) and having $\mu$ small implies that $W\varepsilon^{-1}$ has to be large (block $(3,3)$ of $\Phi(\mathcal G)$).  

\begin{figure*}
\centering
     \begin{subfigure}[b]{0.45\textwidth}
        \hspace{-1.5cm} 
         \includegraphics[width=1.22\textwidth]{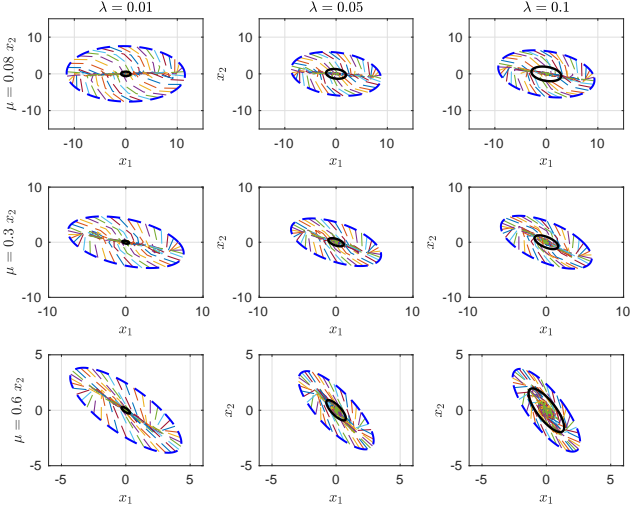}
         \vspace{-0.8cm}
         \caption{Simulations obtained with a collection of $p=5$ data.}
         \label{fig:data_p5}
     \end{subfigure}
     \begin{subfigure}[b]{0.45\textwidth}
       \hspace{-0.6cm}  
         \includegraphics[width=1.22\textwidth]{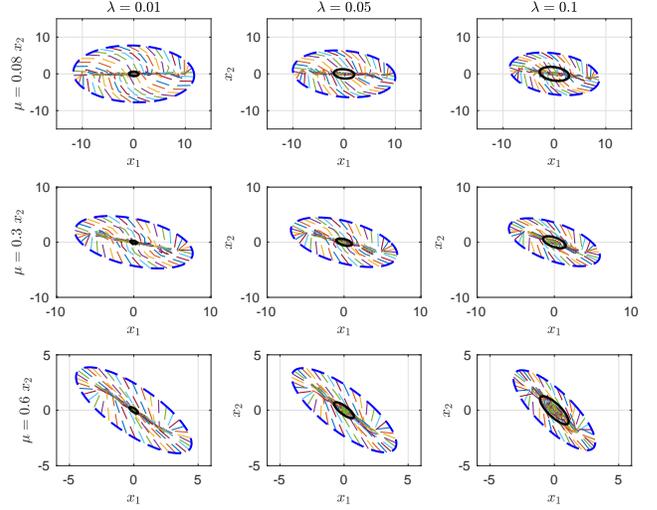}
         \vspace{-0.8cm}
         \caption{Simulations obtained with a collection of $p=20$ data.}
         \label{fig:data_p20}
     \end{subfigure}
        \caption{Estimations of the basin of attraction and of the attractor obtained by solving the data-based optimization problem \eqref{opt1} for various $\lambda$ and $\mu$.}
        \label{fig:data}
\end{figure*}

\subsection{Data-driven local stabilization}

This section illustrates the impact of the data-driven approach compared to the model-driven one. Let us first present the construction of the data. Following the dynamics of the system, we have selected $p=5$ and $p=20$ random values of $x_i$ and $u_i$ in \eqref{def_data} where each component of $x_i$ lies in $[-1,1]$. Then, we have created the random noise signal $w_i$ so that the collection of noise vector $\omega$ belongs to $\Omega$ as described in Assumption~\ref{Assumption2} with $\Delta_\omega=0.05I$. Altogether, vector $\mathcal X^+$ has been computed using equation \eqref{eq:info_data}.

The results are illustrated in Figure~\ref{fig:data}. More precisely, Figures \ref{fig:data_p5} and \ref{fig:data_p20} illustrate the results obtained by solving the data-driven stabilization conditions of Theorem \ref{th:theo_data-robust} with $p=5$ and $p=20$ experiments, respectively. As for the model-based condition of Theorem~\ref{th1}, one can see that increasing the magnitude of the noise leads to a reduction of the size of the estimation of the basin of attraction (dashed blue ellipsoids), and also an increase of the size of the attractor (black ellipsoids). The same effect of the tuning parameters can be also detected. 

Compared to the model-driven results in Figure~\ref{fig:ex1:model}, it occurs a notable reduction of both ellipsoids for each case. This is due to the fact that the model-based solution has no uncertainties in the matrices $A$ and $B$, while the data-driven condition includes the inherent uncertainties due to the noise affecting the data. 

Comparing now both data-driven result, one can see that each estimation of the basin of attraction obtained for $p=5$ are smaller than the one derived with $p=20$. This can be interpreted by the fact that increasing the number of data experiments give more information on the system so that the uncertainties due to the noise are reduced. In other words the set of allowable matrices in $\mathcal C$ becomes ``smaller'' as the number of data increases.

As a last comment, it seems surprising that the second column  of Figure\ref{fig:data_p5} (corresponding to $\lambda=0.05$) shows smallest estimation of the basin of attraction than the third column (corresponding to $\lambda=0.1$). In light of the previous paragraph, a possible explanation is again related to the set of allowable matrices $\mathcal C$. As the data have been generated randomly, a possible explanation is that set $\mathcal C$ obtained with a random noise signal in $\Omega$ with $\lambda =0.05$ is ``smaller'' than the one obtained with $\lambda =0.1$. 

\section{Conclusion}
\label{sec:conclu}
This paper addressed the problem of providing a data-driven solution to the local stabilization of linear systems subject to input saturation. Using model-based solution to this well-studied problem, a systematic method to transform model-driven into data-driven LMI conditions is presented thanks to some adequate rewriting of the conditions. Although this technical solution is shown to be equivalent to some recent advanced results, its main advantage relies on its simplicity and its potential to be applicable to a broad class of problems of stabilization of (non)linear discrete-time systems. 

The proposed results pave the way for future work, as in particular the possibility to consider more complex dynamics or isolated nonlinearities

\bibliographystyle{plain}
\bibliography{bibSat_data.bib}          

\begin{thebibliography}{10}

\bibitem{berberich2020robust}
J.~Berberich, A.~Koch, C.W. Scherer, and F.~Allg{\"o}wer.
\newblock Robust data-driven state-feedback design.
\newblock In {\em Proceedings of the IEEE American Control Conference}, pages
  1532--1538. IEEE, 2020.

\bibitem{berberich2020data}
J.~Berberich, J.~K{\"o}hler, M.~A M{\"u}ller, and F.~Allg{\"o}wer.
\newblock Data-driven model predictive control with stability and robustness
  guarantees.
\newblock {\em IEEE Trans. on Automatic Control}, 66(4):1702--1717, 2020.

\bibitem{bisoffi2022data}
A.~Bisoffi, C.~De~Persis, and P.~Tesi.
\newblock Data-driven control via {P}etersen’s lemma.
\newblock {\em Automatica}, 145:110537, 2022.

\bibitem{Boyd}
S.~Boyd, L.~El~Ghaoui, E.~Feron, and V.~Balakrishnan.
\newblock {\em Linear Matrix Inequalities in System and Control Theory}.
\newblock Society for Industrial and Applied Mathematics, 1997.

\bibitem{bre:per:for:tes/CDC2021}
V.~Breschi, C.~De Persis, S.~Formentin, and P.~Tesi.
\newblock Direct data-driven model-reference control with {L}yapunov stability
  guarantees.
\newblock In {\em Proceedings of the IEEE Conference on Decision and Control},
  pages 1456--1461. IEEE, 2021.

\bibitem{dePersis2019formulas}
C.~De~Persis and P.~Tesi.
\newblock Formulas for data-driven control: Stabilization, optimality, and
  robustness.
\newblock {\em IEEE Trans. on Automatic Control}, 65(3):909--924, 2019.

\bibitem{ebihara2015s}
Y.~Ebihara, D.~Peaucelle, and D.~Arzelier.
\newblock {\em S-variable approach to {LMI}-based robust control}, volume~6.
\newblock Springer, 2015.

\bibitem{da2005antiwindup}
J.M. {Gomes Da Silva} and S.~Tarbouriech.
\newblock Antiwindup design with guaranteed regions of stability: an
  {LMI}-based approach.
\newblock {\em IEEE Trans. on Automatic Control}, 50(1):106--111, 2005.

\bibitem{hou2013model}
Z.S. Hou and Z.~Wang.
\newblock From model-based control to data-driven control: Survey,
  classification and perspective.
\newblock {\em Information Sciences}, 235:3--35, 2013.

\bibitem{hu:lin/book2001}
T.~Hu and Z.~Lin.
\newblock {\em Control systems with actuator saturation: analysis and design}.
\newblock Birkhauser, Boston, 2001.

\bibitem{pig:for:bem/CST2018}
D.~Piga, S.~Formentin, and A.~Bemporad.
\newblock Direct data-driven control of constrained systems.
\newblock {\em IEEE Transactions on Control Systems Technology}, 26:1422--1429,
  2018.

\bibitem{postlethwaite2007robust}
I.~Postlethwaite, M.C. Turner, and G.~Herrmann.
\newblock Robust control applications.
\newblock {\em Annual Reviews in Control}, 31(1):27--39, 2007.

\bibitem{scherer2001theory}
C.W. Scherer.
\newblock {\em Theory of robust control}.
\newblock Citeseer, 2001.

\bibitem{Seuret2022Data:SAS}
A.~Seuret, C.~Albea, and F.~Gordillo.
\newblock {LMI} relaxations and its application to data-driven control design
  for switched affine systems.
\newblock {\em Submitted to Int. Journal of Robust and Nonlinear Control},
  2022.

\bibitem{seuret2023LCSS}
A.~Seuret and S.~Tarbouriech.
\newblock A data-driven approach to the $\mathcal l_2$ stabilization of linear
  systems subject to input saturations.
\newblock {\em IEEE Control Systems Letters, under review}, 2023.

\bibitem{tar:gar:gom:que/book}
S.~Tarbouriech, G.~Garcia, {J.M.} {Gomes da Silva Jr.}, and I.~Queinnec.
\newblock {\em Stability and Stabilization of Linear Systems with Saturating
  Actuators}.
\newblock Springer, 2011.

\bibitem{tarbouriech1997admissible}
S.~Tarbouriech and J.M. Gomes Da Silva~Jr.
\newblock Admissible polyhedra for discrete-time linear systems with saturating
  controls.
\newblock In {\em Proceedings of the 1997 American Control Conference (Cat. No.
  97CH36041)}, volume~6, pages 3915--3919. IEEE, 1997.

\bibitem{vanWaarde2020noisy}
H.J. {Van Waarde}, M.K. Camlibel, and M.~Mesbahi.
\newblock From noisy data to feedback controllers: non-conservative design via
  a matrix {S}-lemma.
\newblock {\em IEEE Trans. on Automatic Control}, 67(1):162 -- 175, 2022.

\bibitem{van2020data}
H.J. Van~Waarde, J.~Eising, H.L. Trentelman, and M.K. Camlibel.
\newblock Data informativity: a new perspective on data-driven analysis and
  control.
\newblock {\em IEEE Trans. on Automatic Control}, 65(11):4753--4768, 2020.

\bibitem{vassilaki1988feedback}
M.~Vassilaki, J.C. Hennet, and G.~Bitsoris.
\newblock Feedback control of linear discrete-time systems under state and
  control constraints.
\newblock {\em International Journal of control}, 47(6):1727--1735, 1988.

\end{thebibliography}


\section*{Appendix-Proof of 
Lemma \ref{lem0}}
\begin{proof}
The proof is divided into two steps. 
\newline \underline{\textit{(i)$\Rightarrow$(ii):}} The first step of the proof is to find an appropriate expression of matrices $\mathcal A$ that belong to $\Sigma_\mathcal N$. 
For any matrix $\mathcal A$ in $\Sigma_\mathcal N$, it holds
$$\begin{array}{lcl}
0&\!\!\preceq&\!\! 
\mathcal R-\begin{bmatrix}
I_{n_3}\\
\mathcal A\\
\end{bmatrix}^{\!\top} 
\begin{bmatrix}
\mathcal R+ \mathcal N_1&\!\!\mathcal N_2\\
 \ast &\!\!\mathcal N_3\\
\end{bmatrix}   \left[\begin{matrix}
I_{n_3}\\
\mathcal A\\
\end{matrix}\right]\\
\end{array}
$$
where $\mathcal R$ is any matrix in $\mathbb S^{n_3}$. In addition,  $\mathcal N_3\succ0$ ensures that there exists a matrix $\mathcal R$ such that $ \begin{bsmallmatrix} \mathcal R+ \mathcal N_1&\!\!\mathcal N_2\\
 \ast &\!\!\mathcal N_3 \end{bsmallmatrix}$ is positive definite. That  allows applying the Schur complement as follows to obtain
$$
\begin{bmatrix}
\begin{bmatrix}
\mathcal R+ \mathcal N_1&\!\!\mathcal N_2\\
 \ast &\!\!\mathcal N_3\\
\end{bmatrix}^{-1} & \
\left[\begin{matrix}
I_{n_3}\\
\mathcal A\\ 
\end{matrix}\right]\\
\ast& \mathcal R
\end{bmatrix}\succ0
$$
Note that it is not the usual way to apply the Schur Complement, but this dual way has been considered to keep block $(1,2)$ (resp. $(2,1)$) with $ \left[\begin{smallmatrix}
I_{n_3}\\
\mathcal A\\ 
\end{smallmatrix}\right]$ (resp. its transpose). Next, pre- and post-multiply the previous inequality by $$\begin{bmatrix}
 \boldsymbol{0}_{n_1,{n_3}} & \mathcal Z &\boldsymbol{0}_{{n_1},{n_3}}\\ \boldsymbol{0}_{n_3} & \boldsymbol{0}_{{n_3},{n_2}}&I_{n_3} \end{bmatrix}$$ and its transpose, respectively, where $\mathcal Z$ is any matrix in $\mathbb R^{n_1\times {n_2}}$, i.e. of the same dimensions as $ \mathcal M_2$. This yields that having $\mathcal A\in  \Sigma_\mathcal N$ implies 
\begin{equation}\label{Lem0_Proof_cond}
\begin{array}{l}
\begin{bmatrix}
\begin{bmatrix} \boldsymbol{0}_{{n_1},{n_3}} \\ \mathcal Z^{\!\top}  \end{bmatrix} ^{\!\!\top}\!\!
\begin{bmatrix}
\mathcal R+ \mathcal N_1&\!\!\mathcal N_2\\
 \ast &\!\!\mathcal N_3\\
\end{bmatrix}^{-1}\!\! \begin{bmatrix} \boldsymbol{0}_{n_1,{n_3}} \\ \mathcal Z^{\!\!\top}  \end{bmatrix}^{\!\!\top}
&\!\!\!\! \begin{bmatrix} \boldsymbol{0}_{{n_1},{n_3}} \\ \mathcal Z^{\!\top}  \end{bmatrix} ^{\!\!\top}\!\!
\left[\begin{matrix}
I_{n_3}\\
\mathcal A\\
\end{matrix}\right]\\
\ast&\!\!\!\! \mathcal R
\end{bmatrix}\\
=\begin{bmatrix}
\begin{bmatrix} \boldsymbol{0}_{n_1,{n_3}} \\ \mathcal Z^{\!\top}  \end{bmatrix} ^{\!\!\top}
\begin{bmatrix}
\mathcal R+ \mathcal N_1&\!\!\mathcal N_2\\
 \ast &\!\!\mathcal N_3\\
\end{bmatrix}^{-1} \begin{bmatrix} \boldsymbol{0}_{{n_1},{n_3}} \\ \mathcal Z^{\!\top}  \end{bmatrix} 
^{\!\top} &  \mathcal Z
\mathcal A\\
\ast& \mathcal R
\end{bmatrix}\succ0.
\end{array}
\end{equation}
The previous calculations ensure that inequality \eqref{lem0_1} can be rewritten as $\mathcal M(\mathcal A)\succ 0$, for all $\mathcal A$ such that  \eqref{Lem0_Proof_cond} holds true. Using an S-procedure, this statement is equivalent to the existence of a positive scalar $\eta>0$ 
such that
$$\begin{array}{l}
\begin{bmatrix}
\mathcal M_{1} &  \mathcal M_{2}
\mathcal A\\
\ast & \mathcal M_{3}
\end{bmatrix}\\-\eta 
\begin{bmatrix}
\begin{bmatrix} \boldsymbol{0}_{{n_1},{n_3}} \\ \mathcal Z^{\!\top}  \end{bmatrix} ^{\!\top}
\begin{bmatrix}
\mathcal R+ \mathcal N_1&\!\!\mathcal N_2\\
 \ast &\!\!\mathcal N_3\\
\end{bmatrix}^{-1} \begin{bmatrix} \boldsymbol{0}_{n_1,{n_3}} \\ \mathcal Z^{\!\top}  \end{bmatrix} 
^{\!\top} &  \mathcal Z
\mathcal A\\
\ast& \mathcal R
\end{bmatrix}\succ 0,
\end{array}
$$
which together with the Schur complement can be written
$$
\begin{bmatrix}
\mathcal M_{1} & (\mathcal M_{2}-\eta \mathcal Z)
\mathcal A & \boldsymbol{0}_{n_1,{n_3}} & \eta\mathcal Z  \\
\ast & \mathcal M_{3}-\eta \mathcal R& \boldsymbol{0}_{n_3}&\boldsymbol{0}_{{n_3},{n_2}}\\
\ast&\ast & 
\mathcal R+ \mathcal N_1&\!\!\mathcal N_2\\
 \ast &\ast &\ast &\!\!\mathcal N_3\\
\end{bmatrix}
\succ 0
$$
Selecting $\eta \mathcal R= \mathcal M_3-\epsilon I_{n_3}$, for any arbitrarily small $\epsilon>0$  and $\eta \mathcal Z =  \mathcal M_{2}$ so that the second (block) row and columns are zero and the uncertain matrix $\mathcal A$  disappears. The previous matrix becomes
$$
\begin{bmatrix}
\mathcal M_{1} & \boldsymbol{0}_{n,p} &\boldsymbol{0}_{n_1,{n_3}}&\mathcal M_2 \\
\ast & \epsilon I_{n_3}& \boldsymbol{0}_{n_3} &\boldsymbol{0}_{{n_3},{n_2}}\\
\ast&\ast &\mathcal M_{3}+\eta \mathcal N_1 -\epsilon I_{n_3}& \eta \mathcal N_2\\
\ast&\ast&\ast&\eta\mathcal N_3
\end{bmatrix}
\succ 0
$$
which is equivalent to \eqref{lem:IneqY} knowing that $\epsilon$ is an arbitrarily small scalar. 
\newline \underline{\textit{(ii)$\Rightarrow$(i):}} Pre- and post-multiply \eqref{lem:IneqY} by $\begin{bsmallmatrix}I_{n_1}&\boldsymbol{0}_{{n_1},{n_3}}& \boldsymbol{0}_{{n_1},{n_3}}\\ 
\boldsymbol{0}_{{n_3},{n_1}}& I_{n_3} & \mathcal A^{\!\top} \end{bsmallmatrix}$ and its transpose, respectively, leads to 
$$
\begin{array}{lcl}
0&\!\!\prec& \!\!\begin{bmatrix}
\mathcal M_{1} & \mathcal M_{2}\mathcal  A\\
\ast & \mathcal M_{3}
\end{bmatrix}+\eta \begin{bmatrix}
\boldsymbol{0}_{n_1}& \boldsymbol{0}_{n_1,{n_3}}\\
\ast &   \left[\begin{matrix}
I_{n_3}\\
\mathcal A\\
\end{matrix}\right]^{\!\!\top} \!\! \begin{bmatrix}
 \mathcal N_1&\!\!\mathcal N_2\\
 \ast &\!\!\mathcal N_3\\
\end{bmatrix} \left[\begin{matrix}
I_{n_3}\\
\mathcal A\\
\end{matrix}\right]
\end{bmatrix}\\
&\!\!\prec &\!\!\begin{bmatrix}
\mathcal M_{1} & \mathcal M_{2}\mathcal  A\\
\ast & \mathcal M_{3}
\end{bmatrix}
\end{array}
$$ 
where  the last inequality holds only for matrices $\mathcal A$ that belongs to $\Sigma_\mathcal N$, which concludes the proof. 
\end{proof}

\end{document}